\newcommand{\PreserveBackslash}[1]{\let\temp=\\#1\let\\=\temp}
\newcolumntype{C}[1]{>{\PreserveBackslash\centering}p{#1}}
\newcommand\nc{\newcommand}
\nc{\eqn}[2]{\begin{eqnarray}\label{eqn:#1} #2 \end{eqnarray}}
\nc{\refe}[1]{(\ref{eqn:#1})}
\begin{document}

  \title{Upper and Lower Bounds on Approximating Weighted Mixed Domination
  }


  \author{Mingyu Xiao 
  }


  \institute{
             Mingyu Xiao \at
             School of Computer Science and Engineering,\\
University of Electronic Science and Technology of China,\\
 Chengdu, China\\
                \email{myxiao@gmail.com}
  }

  \date{Received: date / Accepted: date}

  \maketitle

  \begin{abstract}
  A mixed dominating set of a graph $G = (V, E)$ is a mixed set $D$ of vertices and edges, such that for every edge or vertex, if it is not in $D$, then it is adjacent or incident to at least one vertex or edge in $D$. The mixed domination problem is to find a mixed dominating set with a minimum cardinality. It has applications in system control and some other scenarios and it is $NP$-hard to compute an optimal solution.
This paper studies approximation algorithms and hardness of the weighted mixed dominating set problem. The weighted version is a generalization of the unweighted version, where all vertices are assigned the same nonnegative weight $w_v$ and all edges are assigned the same nonnegative weight $w_e$, and the question is to find a mixed dominating set with a minimum total weight. Although the mixed dominating set problem has a simple 2-approximation algorithm, few approximation results for the weighted version are known. The main contributions of this paper include:
\begin{enumerate}
\item[1.] for $w_e\geq w_v$, a 2-approximation algorithm;
\item[2.] for $w_e\geq 2w_v$, inapproximability within ratio 1.3606 unless $P=NP$ and within ratio 2 under UGC;
\item[3.] for $2w_v > w_e\geq w_v$, inapproximability within ratio 1.1803 unless $P=NP$ and within ratio 1.5 under UGC;
\item[4.] for $w_e< w_v$, inapproximability within ratio $(1-\epsilon)\ln |V|$ unless $P=NP$ for any $\epsilon >0$.
\end{enumerate}

\keywords{Approximation algorithms\and Inapproximability\and  Domination}
  \end{abstract}

\section{Introduction}
Domination is an important concept in graph theory.
In a graph, a vertex \emph{dominates} itself and all neighbors of it,
and an edge \emph{dominates} itself and all edges sharing an endpoint with it.
The \textsc{Vertex Dominating Set} problem \cite{Hedetniemi1991Bibliography} (resp., \textsc{Edge Dominating Set} problem \cite{Yannakakis1980Edge}) is to find a minimum set of vertices to dominate all vertices (resp., a minimum set of edges to dominate all edges) in a graph.
These two domination problems have many applications in different fields.
For example, in a network, structures like dominating sets play an important role in global flooding to alleviate the so-called broadcast storm problem. A message broadcast only in the dominating set is an efficient way to ensure that it is received by all transmitters in the network, both in terms of energy and interference~\cite{Nieberg2005A}.
More applications and introduction to domination problems can be found in the literature~\cite{survey}.

Domination problems are rich problems in the field of algorithms.
Both \textsc{Vertex Dominating Set} and \textsc{Edge Dominating Set} are $NP$-hard~\cite{Garey1979Computers,Yannakakis1980Edge}. There are several interesting algorithmic results
about the polynomial solvability on special graph \cite{Zhao2011The,Lan2013On}, approximation algorithms \cite{Johnson1973Approximation,Fujito2002A,EMPX}, parameterized algorithms \cite{Xiao2011New,Xiao3eds} and so on.

In this paper, we consider a related domination problem, called the \textsc{Mixed Domination} problem.
Mixed domination is a mixture concept of vertex domination and edge domination, and \textsc{Mixed Domination} requires to find a set of edges and vertices with the minimum cardinality to dominate other edges and vertices in a graph.
\textsc{Mixed Domination} was first proposed by Alavi et al. based on some specific application scenarios and it was named as the \textsc{Total Covering} problem initially~\cite{Alavi1977Total}. Although we prefer to call this problem a ``domination problem'' at present, it has some properties of ``covering problems'' and can also be treated as a kind of covering problems.
For applications of \textsc{Mixed Domination}, a direct application in system control
was introduced by Zhao et al.~\cite{Zhao2011The}. They used it to minimize the number of phase measurement units (PMUs) needed to be placed and maintain the ability of monitoring the entire system.
We can see that \textsc{Mixed Domination} has drawn certain attention since its introduction~\cite{Lan2013On,Manlove1999On,Zhang1992On,Zhao2011The,MDparameterized}.

\textsc{Mixed Domination} is $NP$-hard even on bipartite and chordal graphs and planar bipartite graphs of maximum degree 4~\cite{Manlove1999On}.
Most of known algorithmic results of \textsc{Mixed Domination} are about the polynomial-time solvable cases on special graphs. Zhao et al.~\cite{Zhao2011The} showed that this problem in trees can be solved in polynomial time. Lan et al.~\cite{Lan2013On}
  provided a linear-time algorithm for \textsc{Mixed Domination} in cacti, and introduced a labeling algorithm based on the primal-dual approach for \textsc{Mixed Domination} in trees.
Recently, \textsc{Mixed Domination} was studied from the parameterized perspective~\cite{MDparameterized}. Several parameterized complexity results under different parameters have been proved.

In terms of approximation algorithms, domination problems have also been extensively studied.
It is easy to observe that a maximum matching in a graph is a 2-approximation solution to \textsc{Edge Dominating Set}. But for \textsc{Vertex Dominating Set}, the best known approximation ratio is $\log|V|+1$~\cite{Johnson1973Approximation}. As a combination of \textsc{Edge Dominating Set} and \textsc{Vertex Dominating Set}, \textsc{Mixed Domination} has a simple 2-approximation algorithm~\cite{Hatami2007An}.

We will study approximation algorithms for weighted mixed domination problems.
A mixed dominating set contains both edges and vertices. \textsc{Mixed Domination} does not distinguish them in the solution set, and only considers the cardinality. However, edge and vertex are two different elements and they may have
different contributions or prices in practice. In the application example in~\cite{Zhao2011The}, we select vertices and edges to place phase measurement units (PMUs) on them
to monitor their mixed neighbors' state variables
in an electric power system. The price to place PMUs on edges and vertices may be different due to the different physical structures. It is reasonable to distinguish edge and vertex by setting different weights to them.
So we introduce the following weighted version problem.

\noindent\rule{\linewidth}{0.2mm}
\textsc{Weighted Mixed Domination} (WMD)\\
\textbf{Instance:} A single undirected graph $G=(V,E)$, and two nonnegative values $w_v$ and $w_e$.\\
\textbf{Question:} To find a vertex subset $V_D\subseteq V$ and an edge subset $E_D\subseteq E$ such that \\
(i) any vertex in $V\setminus V_D$ is either an endpoint of an edge in $E_D$ or adjacent to a vertex in $V_D$;\\
(ii) any edge in $E\setminus E_D$ has at least one endpoint that is either an endpoint of an edge in $E_D$ or a vertex in $V_D$;\\
(iii) the value $w_v|V_D|+w_e|E_D|$ is minimized under the above constraints.\\
\rule{\linewidth}{0.2mm}

In \textsc{Weighted Mixed Domination}, all vertices (resp., edges) receive the same weight. Although the weight function may not be very general, the hardness of the problem increases dramatically, especially in approximation algorithms. It is easy to see that the 2-approximation algorithm for the unweighted version in \cite{Hatami2007An} cannot be extended to the weighted version.
In fact, for most domination problems, the weight version may become much harder. For example,
it is trivial to obtain a 2-approximation algorithm for \textsc{Edge Dominating Set}. But for the weighted version of \textsc{Edge Dominating Set},
it took years to achieve the same approximation ratio~\cite{Fujito2002A}.
In order to obtain more tractability results for \textsc{Weighted Mixed Domination}, we consider two cases:
\textsc{Vertex-Favorable Mixed Domination} (VFMD) and \textsc{Edge-Favorable Mixed Domination} (EFMD).
If we add one more requirement $w_v\leq w_e$ in \textsc{Weighted Mixed Domination}, then it becomes \textsc{Vertex-Favorable Mixed Domination}.  \textsc{Edge-Favorable Mixed Domination} is defined in a similar way by adding a requirement $w_e\leq w_v$. In fact, we will further distinguish two cases of \textsc{Vertex-Favorable Mixed Domination} to study its complexity.
We summarize our main algorithmic and complexity results for \textsc{Weighted Mixed Domination} in Table~\ref{table1}, where $\varepsilon$ is any value $>0$.

\begin{table}[h]
\centering
\caption{Upper and lower bounds on approximating WMD}\label{table1}
\begin{center}
\begin{tabular}{c|c|c|c c}\hline
\multicolumn{2}{c|}{\multirow{2}{*}{Problems}} &  \multicolumn{3}{c}{Approximation ratio}\\
\cline{3-5}
\multicolumn{2}{c|}{\multirow{3}{*}{}} & Upper bounds & \multicolumn{2}{c}{Lower bounds} \\
\hline
\multirow{4}{*}{VFMD} & \multirow{2}{*}{$2w_v \leq w_e$}  & \multirow{4}{*}{2}  & $10\sqrt{5}-21-\varepsilon$ & if $P\neq NP$ (Theorem~\ref{th_3}) \\
\multirow{4}{*}{} & \multirow{2}{*}{}  & \multirow{4}{*}{(Theorems~\ref{th_2} and~\ref{th_2-1})}  & $2-\varepsilon$ & under UGC (Theorem~\ref{th_3})\\
\cline{2-2}
\cline{4-5}
\multirow{4}{*}{} & \multirow{2}{*}{$w_v\leq w_e< 2w_v$}  & \multirow{4}{*}{}  & $5\sqrt{5}-10-\varepsilon$ & if $P\neq NP$ (Theorem~\ref{hard22}) \\
\multirow{4}{*}{} & \multirow{2}{*}{}  & \multirow{4}{*}{}  & $1.5-\varepsilon$ & under UGC (Theorem~\ref{hard22})\\
\hline
EFMD & $w_v> w_e$  & --  & $(1-\varepsilon)\ln n$ & if $P\neq NP$ (Theorem~\ref{th_4})\\
\hline
\end{tabular}
\end{center}
\end{table}

This paper is organized as follows. Sections~\ref{sec:prelimi} and~\ref{sec:properties} introduce some basic notations and properties.
Section~\ref{sec:vertex} deals with \textsc{Vertex-Favorable Mixed Domination}. The results for the case that $2w_v \leq w_e$ are obtained by proving its equivalence to the \textsc{Vertex Cover} problem. The case that $w_v\leq w_e< 2w_v$ is harder. Our 2-approximation algorithm is based on a linear programming for  \textsc{Vertex Cover}.
The lower bounds are obtained by a nontrivial reduction from  \textsc{Vertex Cover}. Section~\ref{sec:edge} proves lower bounds for  \textsc{Edge-Favorable Mixed Domination}
based on a reduction from the \textsc{Set Cover} problem. Finally, some concluding remarks are given in Section~\ref{sec:con}.

\section{Preliminaries}\label{sec:prelimi}

In this paper, a graph $G=(V,E)$ stands for an undirected simple graph
with a vertex set $V$ and an edge set $E$.
We use $n=|V|$ and $m=|E|$ to denote the sizes of the vertex set and edge set, respectively.
Let $X$ be a subset of $V$.
We use $G-X$ to denote
the graph obtained from $G$ by removing vertices in $X$
together with all edges incident to vertices in $X$.
Let $G[X]$ denote the graph induced by  $X$,
i.e., $G[X]= G -(V\setminus X)$. For a subgraph or an edge set $G'$, we use $V(G')$ to denote the set of vertices in $G'$.

In a graph, a vertex \emph{dominates} itself, all of its neighbors and all edges taking it as one endpoint;
an edge \emph{dominates} itself, the two endpoints of it and all other edges having a common endpoint.
A mixed set of vertices and edges $D\subseteq V\cup E$ is called a \emph{mixed dominating set},
if any vertex and edge are dominated by at least one element in $D$.
For a mixed set $D$ of vertices and edges, a vertex (resp., edge) in $D$ is called a \emph{vertex element} (resp., \emph{edge element}) of $D$,
and the set of vertex elements (resp., edge elements) may be denoted by $V_D$ (resp., $E_D$).Thus $V_D=V(G)\cap D$.
The set of vertices that appear in any form in $D$ is denoted by $V(D)$, i.e.,
$V(D)=\{v\in V(G)|v\in D~\text{or}~v~\text{is adjacent to an edge in $D$}\}$.
It holds that $V_D\subseteq V(D)$.
\textsc{Mixed Domination} is to find a mixed dominating set of the minimum cardinality, and
\textsc{Weighted Mixed Domination} is to find a mixed dominating set $D$ such that $w_v|V_D|+w_e|E_D|$ is minimized.
A \emph{weighted instance} is a graph with each vertex assigned the same nonnegative weight $w_v$ and each edge assigned the same
nonnegative weight $w_e$.
In a weighted instance, for a mixed set $D$ of vertices and edges (it may only contain vertices or edges), we define
 $w(D)=w_v|D\cap V|+w_e|D\cap E|$.


A vertex set in a graph is called a \emph{vertex cover} if any edge has at least one endpoint in this set and
a vertex set is called an \emph{independent set} if any pair of vertices in it are not adjacent in the graph.
The \textsc{Vertex Cover} problem is to find a vertex cover of the minimum cardinality.
We may use $S_{md}$, $S_{wmd}$ and $S_{vc}$ to denote an optimal solution to \textsc{Mixed Domination}, \textsc{Weighted Mixed Domination} and \textsc{Vertex Cover}, respectively.

\section{Properties}\label{sec:properties}
We introduce some basic properties of \textsc{Mixed Domination} and \textsc{Weighted Mixed Domination} in this section.

\begin{lemma}\label{t_2}
Any mixed dominating set of a graph contains all isolating vertices (i.e. the vertices of degree $0$) as vertex elements.
\end{lemma}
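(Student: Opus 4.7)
The plan is to unpack the definition of ``dominate'' directly and observe that every possible way to dominate a vertex, other than placing it in the set, requires the presence of some incident edge.

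First, let $v$ be an isolated vertex of $G$, meaning $v$ has no neighbors and no incident edges. Let $D$ be any mixed dominating set of $G$. By definition, $v$ must be dominated by some element of $D$: either $v$ itself is a vertex element of $D$, or some $u \in V_D$ with $u$ adjacent to $v$ dominates $v$, or some $e \in E_D$ incident to $v$ dominates $v$.

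Next, I would rule out the second and third alternatives using the assumption $\deg(v)=0$. Since $v$ has no neighbors, no vertex $u \neq v$ can satisfy ``$u$ adjacent to $v$,'' so no vertex element distinct from $v$ can dominate $v$. Similarly, since $v$ is incident to no edges of $G$, no edge element of $D$ can dominate $v$. Therefore the only remaining option is $v \in V_D$, which is exactly the claim.

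The argument is essentially a case analysis driven by the definition, so there is no genuine obstacle. The only thing worth being careful about is making sure the convention ``a vertex dominates itself'' is used correctly (so that placing $v$ as a vertex element does in fact dominate it), and that the wording of the domination rules in the preliminaries is applied uniformly to both the vertex-element and edge-element cases.
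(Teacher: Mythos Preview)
Your proof is correct and matches the paper's approach: the paper simply states that the lemma ``follows from the definition of mixed dominating sets directly,'' and your case analysis is exactly the unpacking of that remark.
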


This lemma follows from the definition of mixed dominating sets directly.
Based on this lemma, we can simply include all isolating vertices in the graph to the solution set and assume the graph has no
isolating vertices.
We have said that \textsc{Mixed Domination} is also related to covering problems.
Next, we reveal some relations between  \textsc{Mixed Domination} and \textsc{Vertex Cover}.
By the definitions of vertex covers and mixed dominated sets, we get

\begin{lemma}\label{t_3}
In a graph without isolating vertices, any vertex cover is a mixed dominating set.
\end{lemma}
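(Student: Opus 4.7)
The plan is to verify the two conditions in the definition of a mixed dominating set for an arbitrary vertex cover $C \subseteq V$, viewed as a mixed set with $V_D = C$ and $E_D = \emptyset$.

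First I would handle the edges. Since $E_D = \emptyset$, condition (ii) in the definition of WMD requires that every edge $e = uv \in E$ has at least one endpoint lying in $C$ (there are no edge elements to serve as witnesses). But this is exactly the defining property of a vertex cover, so condition (ii) is immediate.

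Next I would handle the vertices, where the hypothesis about isolated vertices comes into play. Let $v \in V \setminus C$. Because $G$ has no isolated vertices, $v$ has at least one neighbor $u$, giving an edge $uv \in E$. Since $C$ is a vertex cover, either $u \in C$ or $v \in C$; because $v \notin C$, we must have $u \in C$. Hence $v$ is adjacent to a vertex in $V_D$, satisfying condition (i). Combining the two arguments, $C$ is a mixed dominating set.

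There is really no main obstacle: the proof is a direct unpacking of definitions, and the only nontrivial ingredient is noting that without the no-isolated-vertex hypothesis the statement would fail (an isolated vertex lies in no edge and so need not belong to any vertex cover, yet by Lemma~\ref{t_2} it must belong to every mixed dominating set). I would therefore include a brief remark explaining why this hypothesis is necessary, tying the lemma back to Lemma~\ref{t_2}.
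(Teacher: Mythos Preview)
Your proposal is correct and matches the paper's approach: the paper simply asserts that the lemma follows directly from the definitions of vertex covers and mixed dominating sets, without spelling out any details. Your unpacking of conditions~(i) and~(ii) is exactly the routine verification the paper leaves implicit, and your remark tying the isolated-vertex hypothesis to Lemma~\ref{t_2} is a nice addition.
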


Recall that for a mixed dominating set $D$, we use $V(D)$ to denote the set of vertices appearing in $D$. On the other hand, we have that
\begin{lemma}\label{t_4}
For any mixed dominating set $D$, the vertex set $V(D)$ is a vertex cover.
\end{lemma}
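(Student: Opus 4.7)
The plan is a direct case analysis on how each edge of $G$ is dominated by $D$. Fix an arbitrary edge $e = uv \in E$; the goal is to show that at least one of $u,v$ belongs to $V(D)$, which by definition would make $V(D)$ a vertex cover of $G$.

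Since $D$ is a mixed dominating set, the edge $e$ must be dominated by some element of $D$. Unfolding the definitions given in Section~\ref{sec:prelimi}, there are exactly three ways this can happen. First, $e$ could be an edge element of $D$ itself, i.e.\ $e \in E_D$; in that case both endpoints $u$ and $v$ lie in $V(D)$ by the definition $V(D)=\{v\in V(G)\mid v\in D \text{ or } v \text{ is adjacent to an edge in } D\}$. Second, $e$ could be dominated by another edge element $e' \in E_D$ sharing a common endpoint with $e$; the shared endpoint is an endpoint of $e'\in D$, hence belongs to $V(D)$, and it is also an endpoint of $e$. Third, $e$ could be dominated by a vertex element $w \in V_D$, in which case $w$ must be an endpoint of $e$; then $w\in V_D\subseteq V(D)$ and $w\in\{u,v\}$.

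In every one of the three cases, at least one endpoint of $e$ lies in $V(D)$. Since $e$ was arbitrary, $V(D)$ meets every edge of $G$, so it is a vertex cover.

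There is really no obstacle here: the argument is just a careful unpacking of the two definitions (``edge dominated by $D$'' and ``$V(D)$''). The only point one has to be slightly careful about is that the domination of $e$ via a vertex element of $D$ already forces that vertex element to coincide with $u$ or $v$, rather than merely being a neighbor of them — this follows because vertices only dominate the edges they are incident to, not arbitrary edges at distance one.
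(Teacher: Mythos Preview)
Your argument is correct and is exactly the natural unpacking of the definitions. The paper does not give an explicit proof of this lemma at all---it is stated as an immediate observation following the definitions---so your case analysis simply makes explicit what the paper leaves to the reader.
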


Recall that $S_{wmd}$ and $S_{vc}$ denote an optimal solution to \textsc{Weighted Mixed Domination} and \textsc{Vertex Cover} respectively.
It is easy to get the following results from above lemmas.

\begin{corollary}\label{c_1}
For any mixed dominating set $D$, it holds that
$$2|D|\geq |V_D|+2|E_D|\geq |S_{vc}|.$$
\end{corollary}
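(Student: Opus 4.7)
The plan is to prove the two inequalities separately, both of which follow almost immediately from the preceding lemmas and from the natural bound on $|V(D)|$ in terms of the vertex elements and edge elements of $D$.

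For the left inequality $2|D| \geq |V_D| + 2|E_D|$, I would simply use the disjoint decomposition $D = V_D \cup E_D$, which gives $|D| = |V_D| + |E_D|$ and hence $2|D| = 2|V_D| + 2|E_D| \geq |V_D| + 2|E_D|$ since $|V_D| \geq 0$. No further work is needed here.

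For the right inequality $|V_D| + 2|E_D| \geq |S_{vc}|$, the idea is to sandwich $|S_{vc}|$ between $|V(D)|$ and $|V_D| + 2|E_D|$. By Lemma~\ref{t_4}, $V(D)$ is a vertex cover, so $|S_{vc}| \leq |V(D)|$. On the other hand, by the definition of $V(D)$, every vertex of $V(D)$ is either a vertex element of $D$ or an endpoint of an edge element of $D$. Counting vertex elements once and the (at most two) endpoints of each edge element, we obtain $|V(D)| \leq |V_D| + 2|E_D|$. Chaining these two inequalities yields the desired bound.

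There is essentially no obstacle in this proof; the only thing to be careful about is that the counting $|V(D)| \leq |V_D| + 2|E_D|$ allows overlaps (a vertex may be covered by several elements of $D$), but since we are upper-bounding $|V(D)|$ this only strengthens the inequality. The result then follows by combining both inequalities in a single chain.
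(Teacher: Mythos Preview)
Your proposal is correct and matches the paper's intended argument: the paper simply states that the corollary ``is easy to get from the above lemmas,'' and your proof spells out exactly that, using $|D|=|V_D|+|E_D|$ for the first inequality and Lemma~\ref{t_4} together with $|V(D)|\le |V_D|+2|E_D|$ for the second.
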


\begin{lemma}\label{t_5}
Let $G$ be an instance of \textsc{Vertex-Favorable Mixed Domination} having no isolating vertices. For any mixed dominating set $D$ and vertex cover $C$ in $G$, it holds that
\begin{center}
$w(S_{wmd})\leq w(C)$ ~and~ $w(S_{vc})\leq 2w(D)$.
\end{center}
\end{lemma}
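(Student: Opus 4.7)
The plan is to derive each inequality from the two preceding lemmas (Lemma~\ref{t_3} and Lemma~\ref{t_4}), together with the defining hypothesis of vertex-favorable instances, namely $w_v\le w_e$.

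For the first inequality $w(S_{wmd})\le w(C)$, I would argue as follows. Since $G$ has no isolating vertices, Lemma~\ref{t_3} tells us that the vertex cover $C$ is already a mixed dominating set of $G$. Viewing $C$ as a mixed set containing only vertex elements gives $w(C)=w_v|C|$. Because $S_{wmd}$ is a \emph{minimum-weight} mixed dominating set and $C$ is a feasible one, optimality yields $w(S_{wmd})\le w(C)$. No use of $w_v\le w_e$ is needed here, so this step is essentially a one-line consequence of Lemma~\ref{t_3}.

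For the second inequality $w(S_{vc})\le 2w(D)$, I would invoke Lemma~\ref{t_4}: $V(D)$ is a vertex cover of $G$, so by optimality of $S_{vc}$ we have $|S_{vc}|\le |V(D)|$. A vertex appears in $V(D)$ either as a vertex element or as an endpoint of an edge element, and each edge contributes at most two endpoints, so $|V(D)|\le |V_D|+2|E_D|$. Since $S_{vc}$ is a pure vertex set, $w(S_{vc})=w_v|S_{vc}|\le w_v|V_D|+2w_v|E_D|$. Now I would use the vertex-favorable assumption $w_v\le w_e$ to bound $2w_v|E_D|\le 2w_e|E_D|$, and the trivial bound $w_v|V_D|\le 2w_v|V_D|$ (as $w_v\ge 0$), giving
\[
w(S_{vc})\le w_v|V_D|+2w_v|E_D|\le 2w_v|V_D|+2w_e|E_D|=2w(D).
\]

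Neither step is genuinely hard; the only subtlety is remembering that the bound $|V(D)|\le |V_D|+2|E_D|$ goes through the intermediate quantity in Corollary~\ref{c_1}, and that the condition $w_v\le w_e$ is precisely what is needed to absorb the factor on the edge-weight term. Both inequalities then follow cleanly from the structural lemmas without any further construction.
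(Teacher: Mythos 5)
Your proof is correct and follows essentially the same route as the paper: the first inequality is a direct consequence of Lemma~\ref{t_3} and the optimality of $S_{wmd}$, and the second uses the chain $|S_{vc}|\le |V(D)|\le |V_D|+2|E_D|$ (which is exactly the content of Corollary~\ref{c_1}, derived from Lemma~\ref{t_4}) together with $w_v\le w_e$. The only cosmetic difference is that you re-derive the bound $|S_{vc}|\le |V_D|+2|E_D|$ from Lemma~\ref{t_4} rather than citing Corollary~\ref{c_1} directly.
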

\begin{proof}
The first inequality follows from Lemma~\ref{t_3} directly.
By Corollary~\ref{c_1} and $w_v\leq w_e$, we have that
$w(S_{vc})=w_v |S_{vc}|\leq 2w_v|D|=2w_v|V_D|+2w_v|E_D|\leq 2w_v|V_D|+2w_e|E_D|=2w(D)$.
\end{proof}

\begin{corollary}\label{c_2}
Let $G$ be an instance of \textsc{Vertex-Favorable Mixed Domination} having no isolating vertices. It holds that
$$w(S_{wmd})\leq w(S_{vc})\leq 2w(S_{wmd}).$$
\end{corollary}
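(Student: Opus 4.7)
The plan is to obtain both inequalities as direct instantiations of Lemma~\ref{t_5}, which already does the structural work. Lemma~\ref{t_5} asserts two things for arbitrary mixed dominating sets $D$ and vertex covers $C$ in $G$, so the corollary should fall out by choosing $D$ and $C$ to be the respective optima.

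First, to prove the left inequality $w(S_{wmd}) \leq w(S_{vc})$, I would apply the first conclusion of Lemma~\ref{t_5} with $C = S_{vc}$. Since $G$ has no isolating vertices, Lemma~\ref{t_3} guarantees that $S_{vc}$ is itself a mixed dominating set, so it is a feasible competitor for $S_{wmd}$, and Lemma~\ref{t_5} yields $w(S_{wmd}) \leq w(S_{vc})$ immediately.

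Second, for the right inequality $w(S_{vc}) \leq 2 w(S_{wmd})$, I would apply the second conclusion of Lemma~\ref{t_5} with $D = S_{wmd}$. This gives $w(S_{vc}) \leq 2 w(S_{wmd})$ directly, using only that $S_{wmd}$ is a mixed dominating set (which it is by definition) and that $w_v \leq w_e$ in the vertex-favorable setting (already exploited inside Lemma~\ref{t_5}).

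There is essentially no obstacle here: the corollary is a packaging statement, and all the content has already been isolated in Lemma~\ref{t_3} and Lemma~\ref{t_5}. The only thing worth flagging explicitly is the hypothesis that $G$ has no isolating vertices, which is needed so that Lemma~\ref{t_3} (invoked implicitly through Lemma~\ref{t_5}) remains valid; this is passed through from the statement of the corollary itself.
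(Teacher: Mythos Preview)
Your proposal is correct and matches the paper's intent: the corollary is stated immediately after Lemma~\ref{t_5} without a separate proof, as it follows by instantiating that lemma with $C=S_{vc}$ and $D=S_{wmd}$, exactly as you describe.
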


Lemma~\ref{t_5} and Corollary~\ref{c_2} imply the following result.

\begin{theorem}\label{t_6}
For any $\alpha\geq 1$, given an $\alpha$-approximation solution to \textsc{Vertex Cover}, a 2$\alpha$-approximation solution
to \textsc{Vertex-Favorable Mixed Domination} on the same graph can be constructed in linear time.
\end{theorem}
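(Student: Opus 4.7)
The plan is to take the mixed dominating set to be the union of the given vertex cover and the isolating vertices of the graph. Concretely, given an $\alpha$-approximation vertex cover $C$ of the input graph $G$, I would identify the set $I$ of isolating vertices by one linear scan of the adjacency lists and output $D := C \cup I$. Both the scan and the union take $O(n+m)$ time, so the whole procedure is linear.

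Feasibility of $D$ follows directly from Lemmas~\ref{t_2} and~\ref{t_3}. The isolating vertices must be self-dominated, which $I \subseteq D$ provides. On the remaining graph $G' := G - I$, which has no isolating vertices, the set $C$ remains a vertex cover (deleting isolating vertices from $G$ removes no edges), and hence is a mixed dominating set of $G'$ by Lemma~\ref{t_3}. Together, $D$ dominates every vertex and every edge of $G$.

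For the approximation ratio, I would separate the costs contributed by $I$ and by $G'$. Every mixed dominating set of $G$ must contain $I$ (Lemma~\ref{t_2}), so $w(S_{wmd}(G)) = w(I) + w(S_{wmd}(G'))$, while vertex covers are insensitive to isolating vertices, giving $w(S_{vc}(G)) = w(S_{vc}(G'))$. Applying Corollary~\ref{c_2} to $G'$ yields $w(S_{vc}(G')) \leq 2\,w(S_{wmd}(G'))$, and chaining this with $w(C) \leq \alpha\,w(S_{vc}(G))$ together with $\alpha \geq 1$ produces $w(D) \leq w(I) + w(C) \leq 2\alpha\,w(S_{wmd}(G))$. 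The only delicate point is the separate bookkeeping for isolating vertices, since Lemma~\ref{t_3} and Corollary~\ref{c_2} assume their absence; once that preprocessing is factored out, the argument collapses to a short chain of inequalities and presents no substantive obstacle.
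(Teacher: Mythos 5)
Your proposal is correct and follows essentially the same route as the paper's proof: the same construction $C\cup I$, the same separation of the isolating vertices $I$ from the reduced graph $G'=G-I$, and the same chain of inequalities through Corollary~\ref{c_2} ending with $w(I)+2\alpha\,w(S_{wmd}(G'))\leq 2\alpha\,w(S_{wmd}(G))$.
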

\begin{proof}
For a weighted instance $G$, let $I$ be the set of degree-0 vertices in it.
Let $G'=G-I$. Let $C$ be an $\alpha$-approximate solution to  \textsc{Vertex Cover} in $G$, which is also an $\alpha$-approximate solution to  \textsc{Vertex Cover} in $G'$.
Let $S'_{vc}$ be a minimum  vertex cover in $G'$, and $S'_{wmd}$ be an optimal solution to \textsc{Weighted Mixed Domination} in $G'$.
We will show that $C\cup I$ is a 2$\alpha$-approximation solution
to \textsc{Vertex-Favorable Mixed Domination} in $G$. By Lemmas~\ref{t_2} and~\ref{t_3}, we know that $C\cup I$ is a mixed dominating set in $G$.
By Corollary~\ref{c_2}, we know that
$$w(C)\leq \alpha w(S'_{vc})\leq 2\alpha w(S'_{wmd}).$$
In $G$, the set $S_{wmd}=S'_{wmd}\cup I$ is an optimal solution to \textsc{Weighted Mixed Domination}.
We have
$$w(C)+w(I)\leq 2\alpha w(S'_{wmd})+w(I)\leq 2\alpha(w(S'_{wmd})+w(I))=2\alpha w(S_{wmd}),$$
which implies that $C\cup I$ is a 2$\alpha$-approximation solution
to \textsc{Vertex-Favorable Mixed Domination} in $G$. Furthermore, the set $I$ can be computed in linear time.
\end{proof}

\textsc{Vertex Cover} allows 2-approximation algorithms and then we have that

\begin{corollary}\label{c_3}
\textsc{Vertex-Favorable Mixed Domination} allows polynomial-time 4-approximation algorithms.
\end{corollary}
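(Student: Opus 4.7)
The plan is to derive this corollary as an immediate consequence of Theorem~\ref{t_6}, with essentially no additional work beyond invoking a known approximation result for \textsc{Vertex Cover}. The structure already established in the excerpt reduces the task of approximating \textsc{Vertex-Favorable Mixed Domination} to that of approximating \textsc{Vertex Cover} (on the same graph, in linear additional time) at the cost of only a factor of $2$ in the ratio, so any constant-factor approximation for \textsc{Vertex Cover} translates directly.

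First I would recall that \textsc{Vertex Cover} admits well-known polynomial-time $2$-approximation algorithms; two standard choices suffice. One option is the maximal matching heuristic: compute a maximal matching $M$ in $G$ and output $V(M)$, which is a vertex cover of size at most $2|S_{vc}|$. Since we are in the weighted version with a uniform vertex weight $w_v$, this immediately gives $w(V(M))=w_v|V(M)|\le 2 w_v |S_{vc}|=2w(S_{vc})$, i.e., a $2$-approximation. Alternatively, one can cite the LP-rounding algorithm or the classical Bar-Yehuda--Even primal-dual algorithm, both of which achieve ratio $2$ in polynomial time. Either way, this produces an $\alpha$-approximate vertex cover with $\alpha=2$.

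Next I would plug $\alpha=2$ into Theorem~\ref{t_6}. That theorem guarantees that from any $\alpha$-approximate solution to \textsc{Vertex Cover} on the input graph, one can construct in linear time a $2\alpha$-approximate solution to \textsc{Vertex-Favorable Mixed Domination} on the same graph. With $\alpha=2$, this produces a $4$-approximation, and the combined algorithm runs in polynomial time since the vertex-cover step is polynomial and the transformation in Theorem~\ref{t_6} is linear.

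There is essentially no obstacle here: the only substantive content is the weighted $2$-approximation for \textsc{Vertex Cover}, which holds automatically because the vertices all carry the same weight $w_v$, so the weighted and unweighted versions of \textsc{Vertex Cover} coincide up to the factor $w_v$. Thus the statement is obtained by the single-line chain ratio${}=2\alpha=2\cdot 2=4$, and no separate case analysis or extra construction is needed beyond what Theorem~\ref{t_6} already provides.
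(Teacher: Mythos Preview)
Your proposal is correct and matches the paper's own argument: the paper simply notes that \textsc{Vertex Cover} admits a $2$-approximation and applies Theorem~\ref{t_6} with $\alpha=2$ to obtain the $4$-approximation. No additional ideas are needed.
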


\section{\textsc{Vertex-Favorable Mixed Domination}}\label{sec:vertex}
We have obtained a simple 4-approximation algorithm for \textsc{Vertex-Favorable Mixed Domination}. In this section, we improve the ratio to 2 and also show some lower bounds.
We will distinguish two cases to study it: $2w_v \leq w_e$; $w_v\leq w_e< 2w_v$.

\subsection{The case that $2w_v \leq w_e$}
This is the easier case. In fact, we will reduce this case to \textsc{Vertex Cover} and also reduce
\textsc{Vertex Cover} to it, keeping the approximation ratio. Thus, for this case we will get the same approximation upper and lower bounds as that of \textsc{Vertex Cover}.

\begin{lemma}\label{t_7b1}
Let $G$ be a graph having no isolating vertices.
Any minimum vertex cover $S_{vc}$ in $G$ is also an optimal solution to \textsc{Weighted Mixed Domination} with $w_e\geq 2w_v$ in $G$.
\end{lemma}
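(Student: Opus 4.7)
The plan is to prove the two directions separately: first that $S_{vc}$ is actually feasible for the weighted mixed domination instance, and second that no mixed dominating set can beat its weight once we impose $w_e \ge 2w_v$.

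For feasibility, I would simply invoke Lemma~\ref{t_3}: since $G$ has no isolating vertices, any vertex cover, and in particular $S_{vc}$, is already a mixed dominating set. Viewed as a WMD solution, $S_{vc}$ consists only of vertex elements, so its weight is $w(S_{vc}) = w_v\,|S_{vc}|$.

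For optimality, let $D$ be an arbitrary mixed dominating set of $G$, and split it into vertex elements $V_D$ and edge elements $E_D$. The key ingredient is Corollary~\ref{c_1}, which gives $|V_D|+2|E_D| \ge |S_{vc}|$. Using the hypothesis $w_e \ge 2w_v$, I would chain
\begin{equation*}
w(D) \;=\; w_v\,|V_D| + w_e\,|E_D| \;\ge\; w_v\,|V_D| + 2w_v\,|E_D| \;=\; w_v\bigl(|V_D|+2|E_D|\bigr) \;\ge\; w_v\,|S_{vc}| \;=\; w(S_{vc}),
\end{equation*}
which is exactly the desired inequality. Since $D$ was arbitrary, $S_{vc}$ is optimal for WMD.

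There is no serious obstacle here: the statement is essentially a direct composition of Lemma~\ref{t_3} (feasibility) with Corollary~\ref{c_1} together with the arithmetic exploiting $w_e \ge 2w_v$. The only subtle point worth double-checking is that the weight of $S_{vc}$ as a mixed set is correctly counted with $w_v$ alone (no edge elements), which is immediate from the convention $w(D) = w_v|D\cap V| + w_e|D\cap E|$.
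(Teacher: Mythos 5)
Your proof is correct and follows essentially the same route as the paper: both arguments rest on Lemma~\ref{t_3} for feasibility and on the inequality $|V_D|+2|E_D|\geq |S_{vc}|$ (Corollary~\ref{c_1}, i.e.\ the fact that $V(D)$ is a vertex cover) combined with $w_e\geq 2w_v$ for optimality. The paper merely packages the same arithmetic by first showing that $V(S_{wmd})$ is an all-vertex optimal solution, whereas you lower-bound $w(D)$ for an arbitrary mixed dominating set directly; the two are interchangeable.
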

\begin{proof}
 Let $S_{wmd}$ be an optimal solution to \textsc{Weighted Mixed Domination}. The vertex set $V(S_{wmd})$ is still a mixed dominating set by Lemmas~\ref{t_4} and~\ref{t_3}. It holds that
$w(V(S_{wmd}))=w_v|V(S_{wmd})|\leq w_v(|S_{wmd}\cap V|+2|S_{wmd}\cap E|)\leq w_v|S_{wmd}\cap V|+w_e|S_{wmd}\cap E|=w(S_{wmd})$.
Then, $V(S_{wmd})$ is also an optimal solution to \textsc{Weighted Mixed Domination}.
A minimum vertex cover $S_{vc}$ is a mixed dominating set by Lemma~\ref{t_3}.
Note that $V(S_{wmd})$ is a vertex cover by Lemma~\ref{t_4} and then $w(S_{vc})\leq w(V(S_{wmd}))$.
Thus, $S_{vc}$ is an optimal solution to \textsc{Weighted Mixed Domination}.
\end{proof}

\begin{lemma}\label{t_7}
For a weighted instance $G$ having no isolating vertices, if it holds that $w_e\geq 2w_v$,
then any $\alpha$-approximation solution to \textsc{Vertex Cover} is also  an $\alpha$-approximation solution
to \textsc{Weighted Mixed Domination} in $G$.
\end{lemma}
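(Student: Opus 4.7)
The plan is to assemble this statement almost directly from the two pieces that precede it, namely Lemma~\ref{t_7b1} and Lemma~\ref{t_3}. The central observation is that Lemma~\ref{t_7b1} already asserts that, under the hypothesis $w_e \geq 2w_v$ and the absence of isolating vertices, a minimum vertex cover $S_{vc}$ is in fact an optimal solution to Weighted Mixed Domination, so $w(S_{vc}) = w(S_{wmd})$. This is the substantive content; the present lemma only extends it from ``optimal'' to ``any $\alpha$-approximate'' solution.

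With that in hand, the proof would proceed in three short steps. First, I would let $C$ be an arbitrary $\alpha$-approximate vertex cover in $G$, so that by definition $w(C) \leq \alpha\, w(S_{vc})$. Second, since $G$ has no isolating vertices, Lemma~\ref{t_3} guarantees that $C$ is a valid mixed dominating set of $G$, so it is a feasible candidate to measure against $S_{wmd}$. Third, I would chain the two facts together: using Lemma~\ref{t_7b1} to replace $w(S_{vc})$ with $w(S_{wmd})$, one gets
\[
w(C) \;\leq\; \alpha\, w(S_{vc}) \;=\; \alpha\, w(S_{wmd}),
\]
which is exactly the statement that $C$ is an $\alpha$-approximate solution to Weighted Mixed Domination.

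There is no real obstacle here; the lemma is essentially a corollary of Lemma~\ref{t_7b1}. The only thing to be mindful of is making sure the feasibility check (that $C$ is a mixed dominating set at all) is stated explicitly, because otherwise the weight comparison alone would not certify $C$ as a legitimate approximation. Once Lemma~\ref{t_3} is invoked for feasibility and Lemma~\ref{t_7b1} for the equality of optima, the argument closes in two lines.
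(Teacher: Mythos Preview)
Your proposal is correct and matches the paper's proof essentially step for step: the paper also invokes Lemma~\ref{t_3} to certify that $C$ is a mixed dominating set, then uses $|C|\leq \alpha|S_{vc}|$ (hence $w(C)\leq \alpha\,w(S_{vc})$) together with Lemma~\ref{t_7b1} to conclude $w(C)\leq \alpha\,w(S_{wmd})$.
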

\begin{proof}
Let $C$ be an $\alpha$-approximation solution to \textsc{Vertex Cover}.
The set $C$ is a vertex cover and then it is a mixed dominating set by Lemma~\ref{t_3}.
Next, we consider $w(C)$.
 Let $S_{wmd}$ and $S_{vc}$ be an optimal solution to \textsc{Weighted Mixed Domination} and \textsc{Vertex Cover}, respectively.
Since $|C|\leq \alpha |S_{vc}|$, we have that $w(C)\leq \alpha w(S_{vc})$. By Lemma~\ref{t_7b1}, we have that $w(S_{vc}) = w(S_{wmd})$. Thus, $w(C)\leq \alpha w(S_{wmd})$ and $C$ is also an $\alpha$-approximation solution
to \textsc{Weighted Mixed Domination}.
\end{proof}

The best known approximation ratio for \textsc{Vertex Cover} is 2. Theorem~\ref{t_7} implies that

\begin{theorem}\label{th_2}
\textsc{Weighted Mixed Domination} with $2w_v \leq w_e$ allows polynomial-time 2-approximation algorithms.
\end{theorem}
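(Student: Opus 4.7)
The plan is to combine Lemma~\ref{t_7} with the well-known polynomial-time 2-approximation for \textsc{Vertex Cover}, handling isolating vertices by preprocessing exactly as in the proof of Theorem~\ref{t_6}. Since Lemma~\ref{t_7} requires the graph to have no isolating vertices, I will first separate them out and treat them directly.

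Concretely, given a weighted instance $G=(V,E)$ with $2w_v \leq w_e$, I would first compute the set $I$ of degree-$0$ vertices of $G$ in linear time, and set $G' = G - I$. By Lemma~\ref{t_2}, every mixed dominating set of $G$ must contain $I$ as vertex elements, so if $S_{wmd}$ is an optimal solution on $G$, then $S_{wmd} \setminus I$ is an optimal solution $S'_{wmd}$ to \textsc{Weighted Mixed Domination} on $G'$ and $w(S_{wmd}) = w(S'_{wmd}) + w(I)$. Next, I would run any polynomial-time $2$-approximation algorithm for \textsc{Vertex Cover} on $G'$ to obtain a vertex cover $C$ with $w(C) \leq 2\, w(S'_{vc})$, where $S'_{vc}$ is a minimum vertex cover of $G'$.

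The key step is then to invoke Lemma~\ref{t_7}: because $G'$ has no isolating vertices and $w_e \geq 2w_v$, the set $C$ is already a $2$-approximate solution to \textsc{Weighted Mixed Domination} on $G'$, i.e.\ $w(C) \leq 2\, w(S'_{wmd})$. Output $C \cup I$. By Lemma~\ref{t_3}, $C$ is a mixed dominating set of $G'$, and adding $I$ back in (as required by Lemma~\ref{t_2}) yields a mixed dominating set of $G$. Finally, the approximation ratio is preserved by a one-line estimate
\[
w(C\cup I) = w(C) + w(I) \leq 2\, w(S'_{wmd}) + w(I) \leq 2\bigl(w(S'_{wmd}) + w(I)\bigr) = 2\, w(S_{wmd}),
\]
so $C\cup I$ is a $2$-approximate solution in $G$.

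There is really no substantive obstacle here, since the work has all been done in the earlier lemmas; the only thing to be careful about is the bookkeeping around isolating vertices, so that the ratio $2$ on $G'$ translates to ratio $2$ on $G$ rather than something slightly worse. The argument above handles this because isolating vertices contribute the same additive term $w(I)$ to both $w(C \cup I)$ and $w(S_{wmd})$, so the $2$-factor carries over without loss. The overall running time is dominated by the $2$-approximation for \textsc{Vertex Cover}, which is polynomial.
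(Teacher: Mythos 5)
Your proposal is correct and follows essentially the same route as the paper: the paper derives Theorem~\ref{th_2} directly from Lemma~\ref{t_7} combined with the standard polynomial-time 2-approximation for \textsc{Vertex Cover}. Your explicit bookkeeping for the isolating vertices (which the paper handles implicitly via Lemma~\ref{t_2} and the preprocessing convention stated in Section~\ref{sec:properties}) is accurate and only makes the argument more self-contained.
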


For lower bounds, we show a reduction from another direction.

\begin{lemma}\label{l_7}
Let $G$ be an instance having no isolating vertices, where $w_e\geq 2w_v$.
For any $\alpha$-approximation solution $D$ to \textsc{Weighted Mixed Domination} in $G$, the vertex set $V(D)$ is an
$\alpha$-approximation solution to \textsc{Vertex Cover} in $G$.
\end{lemma}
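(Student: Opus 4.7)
The plan is to bound the weight of $V(D)$ against the weight of $D$, then use the $\alpha$-approximation bound on $D$ together with Lemma~\ref{t_7b1} which says a minimum vertex cover has the same weight as an optimal weighted mixed dominating set in the regime $w_e \geq 2w_v$.

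First I would observe that Lemma~\ref{t_4} already guarantees $V(D)$ is a vertex cover of $G$, so it only remains to compare its size to $|S_{vc}|$. The key size bound is the elementary inequality $|V(D)| \leq |V_D| + 2|E_D|$, which holds because each vertex element of $D$ contributes at most one vertex to $V(D)$ and each edge element contributes at most its two endpoints. Multiplying by $w_v$ and using the hypothesis $w_e \geq 2w_v$ gives
\[
w_v|V(D)| \;\leq\; w_v|V_D| + 2w_v|E_D| \;\leq\; w_v|V_D| + w_e|E_D| \;=\; w(D).
\]

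Next I would invoke the assumption that $D$ is an $\alpha$-approximation to WMD, so $w(D) \leq \alpha\, w(S_{wmd})$. Since $G$ has no isolating vertices and $w_e \geq 2w_v$, Lemma~\ref{t_7b1} tells us $w(S_{wmd}) = w(S_{vc}) = w_v|S_{vc}|$. Chaining these inequalities yields $w_v|V(D)| \leq \alpha\, w_v|S_{vc}|$, i.e.\ $|V(D)| \leq \alpha|S_{vc}|$, which is exactly the statement that $V(D)$ is an $\alpha$-approximation to \textsc{Vertex Cover}.

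There is no real obstacle here; the argument is a direct concatenation of (i) the combinatorial bound $|V(D)|\leq |V_D|+2|E_D|$, (ii) the weight inequality enabled by $w_e\geq 2w_v$, and (iii) the equality $w(S_{wmd})=w(S_{vc})$ from Lemma~\ref{t_7b1}. The only point worth being careful about is ensuring that the comparison is done through $w_v|\cdot|$ on both sides (since vertex covers are measured in the vertex weight), which is why isolating the factor $w_v$ at the start and at the end matters.
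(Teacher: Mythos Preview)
Your proof is correct and follows essentially the same route as the paper: both arguments combine the combinatorial bound $|V(D)|\le |V_D|+2|E_D|$, the weight inequality $w_v|V_D|+2w_v|E_D|\le w(D)$ afforded by $w_e\ge 2w_v$, the approximation hypothesis $w(D)\le\alpha\,w(S_{wmd})$, and the identification $w(S_{wmd})=w(S_{vc})$ from Lemma~\ref{t_7b1}, together with Lemma~\ref{t_4} to certify that $V(D)$ is a vertex cover. The only difference is the order in which the inequalities are chained.
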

\begin{proof}
Let $S_{wmd}$ and $S_{vc}$ be an optimal solution to \textsc{Weighted Mixed Domination} and \textsc{Vertex Cover}, respectively.
By Lemma~\ref{t_7b1}, we have that $w(S_{wmd})=w(S_{vc})$.
Then $w(D)\leq \alpha w(S_{wmd})= \alpha w(S_{vc})=\alpha w_v|S_{vc}|$. Note that $w(D)=w_v|D_v|+w_e|D_e|\geq w_v|D_v|+2w_v|D_e|$
and $|V(D)|\leq  |D_v|+2|D_e|$. Thus, $|V(D)|\leq \alpha |S_{vc}|$. Furthermore, $V(D)$ is a vertex cover by Lemma~\ref{t_4}.
We know that $V(D)$ is an $\alpha$-approximation solution to \textsc{Vertex Cover}.
\end{proof}

Dinur and Safra~\cite{DinurS02} proved that it is $NP$-hard to approximate \textsc{Vertex Cover} within any
factor smaller than $10\sqrt{5}-21$.
Khot and Regev~\cite{KhotR03j} also prove that \textsc{Vertex Cover} cannot be
approximated to within $2-\varepsilon$ for any $\varepsilon>0$ under UGC.
Those results and Lemma~\ref{l_7} imply

\begin{theorem}\label{th_3}
For any $\varepsilon>0$, \textsc{Weighted Mixed Domination} with $2w_v \leq w_e$
 is not
$(10\sqrt{5}-21-\varepsilon)$-approximable in polynomial time unless
$P= NP$, and not $(2-\varepsilon)$-approximable in polynomial time under UGC.
\end{theorem}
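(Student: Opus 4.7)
The statement is a contrapositive consequence of Lemma~\ref{l_7} combined with the quoted inapproximability results for \textsc{Vertex Cover}. My plan is to argue that any polynomial-time $\alpha$-approximation algorithm for \textsc{Weighted Mixed Domination} in the regime $2w_v \leq w_e$ could be converted, with no loss in ratio and in polynomial time, into an $\alpha$-approximation for \textsc{Vertex Cover} on the same underlying graph. Plugging in the two forbidden values $\alpha = 10\sqrt{5}-21-\varepsilon$ and $\alpha = 2-\varepsilon$ then yields the two claimed lower bounds for WMD.

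Concretely, I would start from an arbitrary \textsc{Vertex Cover} instance $G$, and as a first step peel off all isolated vertices: they lie in no minimum vertex cover and do not interact with the rest of the graph, so deleting them changes neither the VC-optimum nor the VC-approximation ratio achievable. Call the resulting graph $G'$. I would then view $G'$ as a weighted WMD instance by setting, for instance, $w_v=1$ and $w_e=2$, so that the hypothesis $2w_v\leq w_e$ holds. Feeding $G'$ into the hypothetical $\alpha$-approximation for WMD gives a mixed dominating set $D$. Lemma~\ref{l_7} then guarantees that $V(D)$ is an $\alpha$-approximate vertex cover of $G'$, and because the deleted vertices did not need to be covered, $V(D)$ is also an $\alpha$-approximate vertex cover of the original $G$. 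The whole construction runs in polynomial time.

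The final step is simply to invoke the cited hardness results: by Dinur and Safra~\cite{DinurS02}, achieving ratio $10\sqrt{5}-21-\varepsilon$ for \textsc{Vertex Cover} is impossible in polynomial time unless $P=NP$, and by Khot and Regev~\cite{KhotR03j} achieving ratio $2-\varepsilon$ is impossible under UGC. Combined with the reduction above, this rules out the corresponding approximation ratios for WMD with $2w_v\leq w_e$. There is essentially no genuine obstacle to overcome: Lemma~\ref{t_7b1} has already identified the WMD-optimum with the VC-optimum in this weight regime, and Lemma~\ref{l_7} has already done the work of lifting an approximate WMD solution to an approximate vertex cover; the theorem is just the contrapositive packaging of these two facts. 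The only bookkeeping detail is handling isolated vertices, which I dispose of by the preprocessing step.
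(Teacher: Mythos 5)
Your proposal is correct and matches the paper's own argument: the paper likewise derives Theorem~\ref{th_3} directly from Lemma~\ref{l_7} (which rests on Lemma~\ref{t_7b1}) together with the Dinur--Safra and Khot--Regev lower bounds for \textsc{Vertex Cover}. The isolated-vertex preprocessing you add is the only bookkeeping the paper leaves implicit, and you handle it correctly.
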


\subsection{The case that $w_v\leq w_e< 2w_v$}
To simplify the arguments, in this section, we always assume the initial graph has no degree-0 vertices.
Note that we can include all degree-0 vertices to the solution set directly according to Lemma~\ref{t_2}, which will not
affect our upper and lower bounds.
\subsubsection{Upper bounds}
We show that this case also allows polynomial-time 2-approximation algorithms. Our algorithm is based on a linear programming model
for \textsc{Vertex Cover}.
Note that we are not going to build a linear programming for our problem \textsc{Weighted Mixed Domination} directly.
Instead, we use a linear programming for \textsc{Vertex Cover}.

Linear programming is a powerful tool to design approximation algorithms for \textsc{Vertex Cover} and many other problems.
Lemma~\ref{t_5} and Theorem~\ref{t_6} reveal some connections between \textsc{Weighted Mixed Domination} and \textsc{Vertex Cover}.
 Inspired by these, we investigate approximation algorithms for \textsc{Weighted Mixed Domination} starting from a linear programming model for \textsc{Vertex Cover}.
For a graph $G=(V,E)$, we assign a variable $x_v\in \{0,1\}$ for each vertex $v\in V$ to denote whether it is in the solution set. We can use the following integer programming model (IPVC) to solve \textsc{Vertex Cover}:
\[\begin{array}{*{20}{l}}
\min&\sum_{v \in V} x_v\\
\rm{s.t.} &x_u + x_v \geq 1, \forall uv\in E\\
{}&x_v \in \{0,1\}, \forall v \in V.
\end{array}\]

If relax the binary variable $x_v$ to $0\leq x_v\leq 1$, we get a linear relaxation for \textsc{Vertex Cover}, called LPVC.
We will use $\mathcal{X}'=\{x'_v| v\in V\}$ to denote a feasible solution to LPVC and $w(\mathcal{X}')$ to denote the objective value under $\mathcal{X}'$ on the graph $G$.
LPVC can be solved in polynomial time. However, a feasible solution $\mathcal{X}'$ to LPVC may not be corresponding to a feasible solution to \textsc{Vertex Cover} since the values in $\mathcal{X}'$ may not be integers.
A feasible solution $\mathcal{X}'$ to LPVC is \emph{half integral} if $x'_v \in \{0,\frac{1}{2},1\}$ for all $x'_v \in  \mathcal{X}'$.
Nemhauser and Trotter~\cite{Nemhauser1974Properties} proved some important properties for LPVC.

\begin{theorem} \emph{\cite{Nemhauser1974Properties}}\label{t_8}
 Any basic feasible solution $\mathcal{X}'$ to LPVC is half integral. A half-integral optimal solution to LPVC can be computed in polynomial time.
\end{theorem}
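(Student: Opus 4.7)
The plan is to treat the two assertions of Theorem~\ref{t_8} separately: first the half-integrality of any basic feasible solution, then polynomial-time computability of a half-integral optimum.

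For half-integrality I would argue by contradiction. Suppose $\mathcal{X}' = \{x'_v\}$ is a basic feasible solution with some coordinate in $(0,1)\setminus\{1/2\}$. Partition the offending vertices into $V_+ = \{v : x'_v \in (1/2,1)\}$ and $V_- = \{v : x'_v \in (0,1/2)\}$, which together are nonempty by assumption. Define two perturbed vectors $\mathcal{X}^{\pm}$ by setting $x^{\pm}_v = x'_v \pm \varepsilon$ for $v \in V_+$, $x^{\pm}_v = x'_v \mp \varepsilon$ for $v \in V_-$, and $x^{\pm}_v = x'_v$ on the half-integral coordinates. Then $\mathcal{X}' = \tfrac{1}{2}(\mathcal{X}^+ + \mathcal{X}^-)$ with $\mathcal{X}^+ \neq \mathcal{X}^-$, so once both are feasible for a common small $\varepsilon>0$, $\mathcal{X}'$ cannot be a vertex of the LPVC polytope, contradicting the hypothesis that it is a basic feasible solution.

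The bulk of the work lies in verifying the edge constraints $x^{\pm}_u + x^{\pm}_v \geq 1$; the box constraints $0 \leq x^{\pm}_v \leq 1$ are automatic for small $\varepsilon$ because only coordinates strictly inside $(0,1)$ are moved. I would run a short case analysis on the memberships of $u$ and $v$: the case $u,v \in V_-$ is ruled out by feasibility of $\mathcal{X}'$ (it would force $x'_u+x'_v<1$); the mixed case $u \in V_+, v \in V_-$ cancels the perturbation exactly; the cases $u,v \in V_+$, or $u \in V_+$ paired with $x'_v = 1/2$, or $u \in V_-$ paired with $x'_v = 1$, each leave a strictly positive original slack that absorbs a $\pm 2\varepsilon$ or $\pm\varepsilon$ perturbation; the remaining cases leave the sum unchanged. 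Choosing $\varepsilon$ smaller than every such slack, and smaller than every distance from a perturbed coordinate to $0$ or $1$, handles all edges uniformly. The main obstacle here is really only the bookkeeping: one has to certify that no edge of $G$ falls outside the enumerated cases and that a single $\varepsilon$ works globally; both become routine once the partition is fixed.

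For the second assertion I would invoke polynomial-time linear programming (e.g.\ the ellipsoid method or an interior-point method together with standard basis-recovery rounding) to compute an optimal vertex of the bounded, nonempty LPVC polytope. Any such vertex is by definition a basic feasible solution and, by the first part, is half-integral; since its objective value equals the LP optimum, it is a half-integral optimal solution to LPVC, as required.
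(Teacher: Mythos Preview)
The paper does not prove Theorem~\ref{t_8} at all: it is quoted verbatim from Nemhauser and Trotter~\cite{Nemhauser1974Properties} and used as a black box, so there is no in-paper argument to compare against. Your proposal is essentially the classical Nemhauser--Trotter proof and is correct in outline; the perturbation-and-convex-combination argument you describe is exactly the standard route to half-integrality, and invoking a polynomial-time LP solver to return an optimal vertex is a legitimate way to obtain the second assertion.

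One minor point worth tightening: your catch-all clause ``the remaining cases leave the sum unchanged'' is not literally true---for instance $u\in V_+$ paired with $x'_v=1$ changes the sum by $\pm\varepsilon$---but those residual cases are still trivially feasible because one endpoint already contributes $1$. You should also note that cases such as $u\in V_-$ with $x'_v\in\{0,\tfrac12\}$ or $u\in V_+$ with $x'_v=0$ are infeasible for $\mathcal{X}'$ and hence vacuous. With those two sentences added, the case analysis is complete.
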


We use $\mathcal{X}^*=\{x^*_v| v\in V\}$ to denote a half-integral optimal solution to LPVC. We partition the vertex set $V$ into three parts $V_1$, $V_{\frac{1}{2}}$ and $V_0$ according to $\mathcal{X}^*$, which are the sets of vertices with the corresponding value $x^*_v$ being $1$, $\frac{1}{2}$ and $0$, respectively.
There are several properties for the half-integral optimal solution.

\begin{lemma} \emph{\cite{Nemhauser1974Properties}}\label{nt_1}
For a half-integral optimal solution, all neighbors of a vertex in $V_0$ are in $V_1$, and there is a matching of size $|V_1|$ between $V_0$ and $V_1$.
\end{lemma}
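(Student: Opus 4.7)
The plan is to split the lemma into its two assertions and treat them in turn. The first is a direct consequence of feasibility of $\mathcal{X}^*$; the second follows from a standard LP-exchange argument combined with Hall's theorem and the optimality of $\mathcal{X}^*$.

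For the first assertion, I would take any edge $uv$ with $v \in V_0$. Feasibility of $\mathcal{X}^*$ gives $x^*_u + x^*_v \geq 1$, and $x^*_v = 0$ forces $x^*_u \geq 1$. Half-integrality of $\mathcal{X}^*$ then pins $x^*_u = 1$, so $u \in V_1$. In particular, no edge of $G$ joins $V_0$ to itself or to $V_{1/2}$, so every edge incident to $V_0$ has its other endpoint in $V_1$.

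For the matching assertion, I would apply Hall's theorem to the bipartite graph $H$ whose sides are $V_1$ and $V_0$ and whose edges are the edges of $G$ joining the two sides, aiming to verify the condition $|N_H(S)| \geq |S|$ for every $S \subseteq V_1$. Suppose toward a contradiction that some $S \subseteq V_1$ violates this, and let $T = N_H(S) \subseteq V_0$ with $|T| < |S|$. I would then perturb $\mathcal{X}^*$ by a small $\epsilon > 0$: set $x'_v = x^*_v - \epsilon$ for $v \in S$, $x'_u = x^*_u + \epsilon$ for $u \in T$, and $x'_z = x^*_z$ elsewhere. The objective changes by $(|T|-|S|)\epsilon < 0$, so if $\mathcal{X}'$ remains feasible, we contradict the optimality of $\mathcal{X}^*$.

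The main obstacle is this feasibility check, which is the only nontrivial step. By the first assertion, every edge incident to $T$ runs to $V_1$; by the definition of $T$, every edge from $S$ into $V_0$ runs into $T$. The remaining cases to examine are edges inside $S$, edges from $S$ to $V_{1/2} \cup (V_1 \setminus S)$, edges between $S$ and $T$, and edges from $T$ to $V_1 \setminus S$; in each case one checks that the LPVC constraint stays at least $1$ for all $\epsilon \leq 1/2$, while $0 \leq x'_z \leq 1$ is preserved. Feasibility then yields the promised contradiction, Hall's condition holds, and $H$ contains a matching of size $|V_1|$ between $V_0$ and $V_1$.
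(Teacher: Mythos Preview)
The paper does not actually prove this lemma: it is quoted as a known result of Nemhauser and Trotter and used without proof. Your argument is correct and is essentially the standard proof of this classical fact. The first assertion is immediate from feasibility and half-integrality, and for the second your Hall's-condition-plus-LP-perturbation argument is sound: using the first part to see that $T$ has neighbors only in $V_1$, every edge constraint remains satisfied for $0<\epsilon\le 1/2$, the bounds $0\le x'_z\le 1$ hold, and the objective strictly drops, contradicting optimality.
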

Lemma~\ref{nt_1} implies that $(V_0,V_1,V_{\frac{1}{2}})$ is a \emph{crown decomposition} (see~\cite{A:crown2} for the definition) and a half-integral optimal solution can be used to construct a 2-approximation solution and a $2k$-vertex kernel for \textsc{Vertex Cover}.

\begin{lemma}\label{vc-p}
For a half-integral optimal solution $\mathcal{X}$ to LPVC, we use $G_{\frac{1}{2}}$ to denote the subgraph induced by 
The size of a minimum vertex cover in $G_{\frac{1}{2}}$ is at least $|V_{\frac{1}{2}}|-m$, where
$m$ is the size of a maximum matching in $G_{\frac{1}{2}}$.
\end{lemma}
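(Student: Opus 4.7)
Because every independent set of $G_{\frac{1}{2}}$ is the complement (in $V_{\frac{1}{2}}$) of a vertex cover of $G_{\frac{1}{2}}$, the inequality to be proved is equivalent to the statement that every independent set of $G_{\frac{1}{2}}$ has size at most $m$. My plan is to prove this by first extracting a Hall-type neighborhood condition on $G_{\frac{1}{2}}$ directly from the optimality of the half-integral LP solution $\mathcal{X}$, and then invoking Hall's theorem on a bipartite auxiliary graph.

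\textbf{Step 1: a Hall-type condition on $G_{\frac{1}{2}}$.} The key claim I would establish is that every independent set $I\subseteq V_{\frac{1}{2}}$ of $G_{\frac{1}{2}}$ satisfies $|N_{G_{\frac{1}{2}}}(I)|\geq |I|$. If not, construct $\mathcal{X}'$ from $\mathcal{X}$ by setting $x'_v=0$ for $v\in I$, $x'_v=1$ for $v\in N_{G_{\frac{1}{2}}}(I)$, and $x'_v=x_v$ elsewhere. A short case-check over edge types shows $\mathcal{X}'$ still satisfies every LPVC constraint: edges avoiding $I$ are only strengthened or unchanged; the only edges along which a $V_{\frac{1}{2}}$-coordinate is pushed down to zero are incident either to $V_1$ or to $N_{G_{\frac{1}{2}}}(I)$ (where the partner is now $1$), and by Lemma~\ref{nt_1} no edge joins $V_0$ to $V_{\frac{1}{2}}$, so no constraint with a $V_0$-endpoint is affected. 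The objective shifts by $\tfrac{1}{2}(|N_{G_{\frac{1}{2}}}(I)|-|I|)<0$, contradicting the optimality of $\mathcal{X}$.

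\textbf{Step 2: Hall application and conclusion.} Let $I^*$ be a maximum independent set of $G_{\frac{1}{2}}$ and form the bipartite graph $H$ with parts $I^*$ and $N_{G_{\frac{1}{2}}}(I^*)$ (disjoint since $I^*$ is independent), whose edges are precisely the $G_{\frac{1}{2}}$-edges between the two parts. For any $S\subseteq I^*$, $S$ is itself independent in $G_{\frac{1}{2}}$, so Step~1 yields $|N_H(S)|=|N_{G_{\frac{1}{2}}}(S)|\geq |S|$. Hall's theorem then supplies a matching of $H$ saturating $I^*$, which is a matching of $G_{\frac{1}{2}}$ of size $|I^*|$, so $m\geq |I^*|$. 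Since $V_{\frac{1}{2}}\setminus I^*$ is a minimum vertex cover of $G_{\frac{1}{2}}$ of size $|V_{\frac{1}{2}}|-|I^*|\geq |V_{\frac{1}{2}}|-m$, the claim follows.

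\textbf{Main obstacle.} The entire argument hinges on the feasibility verification of $\mathcal{X}'$ in Step~1, which is exactly where Lemma~\ref{nt_1} is indispensable: without the fact that $V_{\frac{1}{2}}$ has no neighbor in $V_0$, pushing a $V_{\frac{1}{2}}$-value down to zero could uncover an edge and the Hall-type condition would fail. Once this condition is in hand, the remaining bipartite-matching step is short and standard.
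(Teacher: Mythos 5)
Your proof is correct, and it reaches the conclusion by a cleaner route than the paper's. Both arguments share the same engine: perturb the optimal half-integral solution by dropping an independent subset of $V_{\frac{1}{2}}$ to $0$ and raising its $G_{\frac{1}{2}}$-neighborhood to $1$, with Lemma~\ref{nt_1} guaranteeing (as you correctly flag) that no edge into $V_0$ is uncovered; optimality then forces the neighborhood to be at least as large as the independent set. The difference lies in how the matching enters. The paper fixes a maximum matching $M$, assumes the vertex cover is too small, and runs an $M$-alternating-path analysis (essentially re-deriving a K\H{o}nig/Gallai-type argument by hand) to extract a reachable independent set $I_1$ and a smaller ``boundary'' $C_1$ before performing the perturbation. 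You instead isolate the perturbation as a standalone Hall-type condition ($|N_{G_{\frac{1}{2}}}(I)|\geq |I|$ for every independent $I$), apply Hall's theorem to saturate a maximum independent set, and conclude $m\geq\alpha(G_{\frac{1}{2}})$, which is equivalent to the stated bound via the Gallai identity. Your version is more modular and delegates the combinatorial work to a standard theorem, at the cost of invoking Hall; the paper's version is self-contained but considerably more intricate. One small point worth making explicit in a final write-up: the perturbation in your Step~1 is well defined because independence of $I$ gives $I\cap N_{G_{\frac{1}{2}}}(I)=\emptyset$, and edges from $I$ to $V_1$ remain covered because $V_1$-values are untouched.
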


\begin{proof}
Let $M$ be a maximum matching in $G_{\frac{1}{2}}$, where $|M|=m$. We use $V_M$ to denote the set of vertices appearing in $M$ and
$R= V_{\frac{1}{2}} \setminus V_M$, where $|R|=|V_{\frac{1}{2}}|-2m$.
Let $C$ be a minimum vertex cover in $G_{\frac{1}{2}}$. We assume that $|C|< |V_{\frac{1}{2}}|-m$ and show a contradiction that $\mathcal{X}$ is not optimal under the assumption.

We partition the vertex set $V_{\frac{1}{2}}$ into two parts $C$ and $I=V_{\frac{1}{2}}\setminus C$. Note that $C$ is a
vertex cover and then $I$ is an independent set. Let $R_I=R\setminus C$ and $R_C=R\cap C$.
Since $|C|< |V_{\frac{1}{2}}|-m$ and $C$ contains at least one vertex in
each edge in $M$, we know that $R_I=C\setminus R$ is not an empty.
A path $P$ in $G_{\frac{1}{2}}$ that alternates between edges not in $M$ and edges in $M$ is called an \emph{$M$-alternating path}.
We use $C_1$ (resp., $I_1$) to denote the set of vertices in $C$ (resp., in $I$) that are contained in some $M$-alternating paths beginning at a vertex in $R_I$. Let $C_2=C\setminus C_1$ and $I_2=I\setminus I_1$. 

We show that 
\begin{enumerate}
\item[(i)] $|C_2\cap V_M|\geq |I_2\cap V_M|$;
\item[(ii)] there is no edge between a vertex in $I_1$ and a vertex in $C_2$.
\end{enumerate}

For (i), if $|C_2\cap V_M|< |I_2\cap V_M|$, then there exists an edge $ab\in M$ such that $a\in C_1$ and $b\in I_2$. Note that $a\in C_1$ and then $a$ is the end of an $M$-alternating path $P$ beginning at a vertex in $R_I$. Since $a$ is the endpoint of an edge $ab$ in $M$, we know that the last edge in the path $P$ is not in $M$. Thus, $P$ plus edge $ab$ is another
$M$-alternating path beginning at a vertex in $R_I$ and them $b$ must be in $I_1$ instead of $I_2$, a contradiction. So $|C_2\cap V_M|\geq |I_2\cap V_M|$.

For (ii), if there is an edge between $a\in I_1$ and  $b \in C_2$, we will show a contradiction that $M$ is not a maximum matching. First of all, we have that $a\not\in R_I$ otherwise $ab$ can be added
into $M$ to get a larger matching. So we know that $a$ is the endpoint of an edge in $M$ and this edge is between $I_1$ and $C_1$.
Furthermore, $a$ is the end of an $M$-alternating path $P$ beginning at a vertex in $R_I$ since $a\in C_1$. So we can get an $M$-alternating path $P'$ by adding edge $ab$ at the end of $P$. Note that $P'$ is an $M$-alternating path with the first edge and the last edge not in $M$. Switching the edges in $M$ and edges not in $M$ on the path can yield a matching having one more edge than $M$, which is a contradiction to the maximum of $M$.

By $|C_2\cap V_M|\geq |I_2\cap V_M|$ and $|C|< |V_{\frac{1}{2}}|-m$, we can get that $|I_1|> |C_1|$.
Note that any vertex in $I_1$ is only possible to adjacent to vertices in $C_1$ in $G_{\frac{1}{2}}$.
In the whole graph $G$, the vertex set $V_0$ is an independent set of vertices with neighbors only in $V_1$.
So there is no edge between $V_0$ and $I_1$. We know that $V_0\cup I_1$ is an independent set of vertices with neighbors
only in $V_1\cup C_1$.
Let $\mathcal{X}'=\{x'_v| v\in V\}$, where $x'_v=0$ if $v\in V_0\cup I_1$, $x'_v=1$ if $v\in V_1\cup C_1$ and $x'_v={\frac{1}{2}}$
if $v\in V_{\frac{1}{2}}\setminus (I_1\cup C_1)$. We can see that $\mathcal{X}'$ is a feasible half integral solution to LPVC.
Since $|I_1|> |C_1|$, we know that the objective value of $\mathcal{X}'$ is smaller than the objective value of
$\mathcal{X}$, which is a contradiction to the fact that $\mathcal{X}$ is an optimal half integral solution to LPVC.
\end{proof}

We are ready to describe our algorithm now. Our algorithm is based on a half-integral optimal solution $\mathcal{X}^*$ to LPVC. We first include all vertices in $V_1$ to
the solution set as vertex elements, which will dominate all vertices in $V_0\cup V_1$ and all edges incident on vertices in $V_1$.
Next, we consider the subgraph $G[V_{\frac{1}{2}}]$ induced by $V_{\frac{1}{2}}$. We find a maximum matching $M$
in $G[V_{\frac{1}{2}}]$ and include all edges in $M$ to the solution set as edge elements. Last, for all remaining vertices in $V_{\frac{1}{2}}$ not appearing in $M$, include them to the solution set as vertex elements.
The main steps of the whole algorithm are listed in Algorithm~\ref{a_1}.

\renewcommand{\tablename}{Algorithm}
\begin{table}
\noindent\fbox{
  \parbox{\textwidth}{%
        \begin{enumerate}
        \item Compute a half-integral optimal solution $\mathcal{X}^*$ for the input graph $G$ and let $\{V_1,V_{\frac{1}{2}},V_0\}$ be the vertex partition corresponding to $\mathcal{X}^*$.
        \item Include all vertices in $V_1$ to the solution set as vertex elements and delete $V_0\cup V_1$ from the graph (the remaining graph is the induced graph $G[V_{\frac{1}{2}}]$).
            \item Find a maximum matching $M$ in  $G=G[V_{\frac{1}{2}}]$ and include all edges in $M$ to the solution set as edge elements.
            \item Add all remaining vertices in $V_{\frac{1}{2}} \setminus V(M)$ to the solution set as vertex elements.
        \end{enumerate}
  }%
}\medskip
\caption{The main steps of the 2-approximation algorithm}
\label{a_1}
\end{table}

We prove the correctness of this algorithm.  First, the algorithm can stop in polynomial time, because Step~1 uses polynomial time by Theorem~\ref{t_8} and all other steps can be executed in polynomial time.
Second, we prove that the solution set returned by the algorithm is a mixed dominating set.

All vertices in $V_0\cup V_1$ and all edges incident on vertices in $V_0\cup V_1$ are dominated by vertices in $V_1$
because the graph has no degree-0 vertices and $\mathcal{X}^*$ is a feasible solution to LPVC.
All vertices and edges in $G[V_{\frac{1}{2}}]$ are dominated because all vertices in $V_{\frac{1}{2}}$ are included to the solution set either as vertex elements or as the endpoints of
edge elements.
We get the following lemma.

\begin{lemma}\label{alg_1}
Algorithm~\ref{a_1} runs in polynomial time and returns a mixed dominating set.
\end{lemma}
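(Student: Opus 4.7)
The plan is to verify the two claims separately: polynomial runtime, and that the returned set $D$ dominates every vertex and every edge of $G$. For runtime, I would simply note that Step~1 is polynomial by Theorem~\ref{t_8}, Step~3 is a maximum matching computation in $G[V_{\frac{1}{2}}]$ (polynomial by standard algorithms), and Steps~2 and~4 are elementary set manipulations; combining them gives a polynomial overall bound.

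The substantive part is domination, and I would carry out a case analysis along the partition $\{V_1,V_{\frac{1}{2}},V_0\}$ induced by $\mathcal{X}^*$. Every $v\in V_1$ is a vertex element of $D$ and therefore dominates itself together with every edge incident on it. For a vertex $v\in V_0$, Lemma~\ref{nt_1} guarantees $N(v)\subseteq V_1$, and since $G$ has no degree-$0$ vertex, $v$ has at least one neighbor in $V_1\subseteq D$; the same observation handles every edge incident to a $V_0$-vertex, because its other endpoint necessarily lies in $V_1$. Finally, each $v\in V_{\frac{1}{2}}$ is either a matched vertex, hence incident to an edge element of $D$ coming from $M$, or is added as a vertex element in Step~4.

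The one place I expect a bit more care is the sub-case of an edge $e=uv$ with both endpoints in $V_{\frac{1}{2}}$, and this is the main (mild) obstacle. If $e\in M$, then $e$ is itself an edge element of $D$. Otherwise, if either $u$ or $v$ was placed into $D$ as a vertex element in Step~4, $e$ is dominated by that vertex element. The remaining possibility is that both endpoints lie in $V(M)$; then the matching edge incident to $u$ shares the endpoint $u$ with $e$, so $e$ is dominated by that edge element of $D$. Exhausting these four sub-cases finishes the proof that $D$ is a mixed dominating set.
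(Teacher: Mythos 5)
Your proposal is correct and follows essentially the same route as the paper: polynomial time from Theorem~\ref{t_8} plus standard matching algorithms, and domination via the partition $\{V_1,V_{\frac{1}{2}},V_0\}$, using that $V_0$-vertices have all neighbors in $V_1$ (the paper cites feasibility of $\mathcal{X}^*$ and the no-isolated-vertex assumption) and that every vertex of $V_{\frac{1}{2}}$ ends up either as a vertex element or as an endpoint of a matching edge in the solution. Your explicit four-way sub-case analysis for edges inside $G[V_{\frac{1}{2}}]$ just spells out what the paper states in one sentence.
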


Last, we consider the approximation ratio.
Lemma~\ref{nt_1} implies that the size of a minimum vertex cover in the induced subgraph $G[V_0\cup V_1]$ is at least $|V_1|$. By Lemma~\ref{vc-p},
we know that the size of a minimum vertex cover in the induced subgraph $G[V_{\frac{1}{2}}]$ is at least $|V_{\frac{1}{2}}|-m$, where
$m$ is the size of a maximum matching in $G_{\frac{1}{2}}$. So the size of a minimum vertex cover of $G$ is at least $|V_1|+|V_{\frac{1}{2}}|-m$, i.e.,
\eqn{1}{ |S_{vc}|\geq |V_1|+|V_{\frac{1}{2}}|-m.}
Let $D$ denote an optimal mixed dominating set in $G$. By Corollary~\ref{c_1}, we have that $|V_D|+2|E_D|\geq |S_{vc}|$.
By this and $2w_v> w_e$, we have that
\eqn{2}{ w(D)=|V_D|w_v+|E_D|w_e>{\frac{w_e}{2}}|V_D| +w_e|E_D| \geq {\frac{w_e}{2}}|S_{vc}|.}

Let $D'$ denote a mixed dominating set returned by Algorithm~\ref{a_1}. We have that
\[\begin{array}{*{20}{l}}
w(D')&= &|V_1|w_v+m w_e +(|V_{\frac{1}{2}}|-2m)w_v&\\
&\leq &(|V_1|+|V_{\frac{1}{2}}|-m) w_e&\mbox{by~$w_v\leq w_e$} \\
& \leq &|S_{vc}| w_e &\mbox{by~\refe{1}} \\
& \leq &2w(D).&\mbox{by~\refe{2}}
\end{array}\]

\begin{theorem}\label{th_2-1}
\textsc{Weighted Mixed Domination} with $w_v\leq w_e< 2w_v$ allows polynomial-time 2-approximation algorithms.
\end{theorem}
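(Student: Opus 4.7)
The plan is to assemble the analysis already laid out in the preceding discussion into a clean proof. I would begin by invoking Algorithm~\ref{a_1} on the input graph $G$. Lemma~\ref{alg_1} already handles the polynomial running time and the feasibility of the returned set $D'$, so what remains is purely an approximation-ratio estimate comparing $w(D')$ to $w(D)$, where $D$ denotes an optimal mixed dominating set of $G$.

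For the upper bound on $w(D')$, I would just read off the cost from the algorithm description. The set $D'$ uses $|V_1|$ vertex elements from $V_1$, $m$ edge elements from a maximum matching $M$ in $G[V_{\frac{1}{2}}]$, and $|V_{\frac{1}{2}}|-2m$ vertex elements from the unmatched vertices of $V_{\frac{1}{2}}$, giving
\[
w(D') = |V_1|w_v + m\,w_e + (|V_{\frac{1}{2}}|-2m)w_v.
\]
Using $w_v \leq w_e$ to trade each vertex element for the cheaper edge accounting, this is at most $(|V_1|+|V_{\frac{1}{2}}|-m)\,w_e$.

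For the lower bound on $w(D)$, I would combine two ingredients already proved. Lemma~\ref{nt_1} yields a matching of size $|V_1|$ inside $G[V_0\cup V_1]$, forcing any vertex cover of that induced subgraph to have size at least $|V_1|$; Lemma~\ref{vc-p} gives that any vertex cover of $G[V_{\frac{1}{2}}]$ has size at least $|V_{\frac{1}{2}}|-m$. Since $V_0\cup V_1$ and $V_{\frac{1}{2}}$ partition $V$, these bounds add to yield $|S_{vc}| \geq |V_1|+|V_{\frac{1}{2}}|-m$ (this is essentially displayed equation~\refe{1}). Corollary~\ref{c_1} gives $|V_D|+2|E_D|\geq |S_{vc}|$, and combined with the hypothesis $w_e < 2w_v$ this produces
\[
w(D) = w_v|V_D|+w_e|E_D| > \tfrac{w_e}{2}|V_D|+w_e|E_D| \geq \tfrac{w_e}{2}|S_{vc}|,
\]
which is~\refe{2}. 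Chaining: $w(D') \leq (|V_1|+|V_{\frac{1}{2}}|-m)w_e \leq |S_{vc}|\,w_e \leq 2\,w(D)$.

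The main conceptual obstacle has already been resolved in Lemma~\ref{vc-p}: namely, the nontrivial matching/alternating-path argument showing that a minimum vertex cover of the ``half'' part $G[V_{\frac{1}{2}}]$ cannot be smaller than $|V_{\frac{1}{2}}|-m$ (otherwise one could refine $\mathcal{X}^*$ into a strictly cheaper half-integral LP solution, contradicting optimality). Once that and the Nemhauser--Trotter structure of $(V_0,V_1,V_{\frac{1}{2}})$ are in hand, the theorem reduces to the two-line chain of inequalities above, with $w_v\leq w_e$ used to bound $w(D')$ from above and $w_e<2w_v$ used to bound $w(D)$ from below. I would conclude by noting that each step of Algorithm~\ref{a_1} (half-integral LP via Theorem~\ref{t_8}, maximum matching, vertex enumeration) runs in polynomial time.
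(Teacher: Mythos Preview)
Your proposal is correct and follows essentially the same route as the paper: invoke Algorithm~\ref{a_1}, use Lemma~\ref{alg_1} for feasibility and running time, combine Lemma~\ref{nt_1} and Lemma~\ref{vc-p} to get $|S_{vc}|\geq |V_1|+|V_{\frac{1}{2}}|-m$, apply Corollary~\ref{c_1} with $w_e<2w_v$ for the lower bound on $w(D)$, and use $w_v\leq w_e$ for the upper bound on $w(D')$, then chain the inequalities. The only cosmetic slip is the phrase ``trade each vertex element for the cheaper edge accounting'' (edges are the dearer objects here), and the strict inequality in the lower bound on $w(D)$ should technically be non-strict when $V_D=\emptyset$; neither affects the argument.
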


\subsubsection{Lower bounds}

In this section, we give lower bounds for \textsc{Weighted Mixed Domination} with $w_v\leq w_e< 2w_v$.
These hardness results are also obtained by a reduction preserving approximation from \textsc{Vertex Cover}.
Lemma~\ref{t_6} shows that an $\alpha$-approximation algorithm for \textsc{Vertex Cover} implies  a $2\alpha$-approximation algorithm for \textsc{Vertex-Favorable Mixed Domination}.
For \textsc{Weighted Mixed Domination} with $w_e\geq 2w_v$, we have improved the expansion from $2\alpha$ to $\alpha$ in Lemma~\ref{l_7}.
For \textsc{Weighted Mixed Domination} with $w_v\leq w_e< 2w_v$, it becomes harder. We will improve the expansion from $2\alpha$ to $2\alpha-1$.

\begin{lemma}\label{Hard-2}
For any $\alpha \geq 1$, if there is a polynomial-time $\alpha$-approximation
algorithm for \textsc{Weighted Mixed Domination} with $w_v\leq w_e< 2w_v$, then there exists a polynomial-time
$(2\alpha-1)$-approximation algorithm for \textsc{Vertex Cover}.
\end{lemma}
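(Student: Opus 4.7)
My plan is a direct reduction from \textsc{Vertex Cover} to \textsc{Weighted Mixed Domination} on the same graph, differing from Theorem~\ref{t_6} only in a clever choice of the weight ratio $w_e/w_v$. Given a \textsc{Vertex Cover} instance $G$, after setting aside isolating vertices (by Lemma~\ref{t_2} this has no effect on any solution), I set $w_v = 1$ and $w_e = 2\alpha/(2\alpha-1)$; for every $\alpha > 1$ this lies in the admissible range $w_v \leq w_e < 2w_v$, so the hypothesized $\alpha$-approximation algorithm can be invoked on $(G, w_v, w_e)$. Let $D$ be the mixed dominating set it returns and output $C := V(D)$, which is a vertex cover by Lemma~\ref{t_4}.

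For the approximation guarantee I would chain two bounds. The first, from Lemma~\ref{t_3}, is that the optimum vertex cover $S_{vc}$ is itself a mixed dominating set of weight $w_v|S_{vc}|$, so $w(S_{wmd}) \leq w_v|S_{vc}|$ and the approximation guarantee gives $w(D) \leq \alpha w_v|S_{vc}|$. The second bound is purely algebraic and relates $|V(D)|$ to $w(D)$: starting from $|V(D)| \leq |V_D| + 2|E_D|$, the identity $w(D) = w_v|V_D| + w_e|E_D|$, and the trivial inequality $|E_D| \leq w(D)/w_e$, a short calculation (in which the coefficient $2w_v - w_e$ of $|E_D|$ is nonnegative because $w_e \leq 2w_v$) yields
\[
 |V_D| + 2|E_D| \;=\; \frac{w(D)}{w_v} + \frac{2w_v - w_e}{w_v}\,|E_D| \;\leq\; \frac{2\,w(D)}{w_e}.
\]
Chaining both bounds gives $|C| \leq 2\alpha w_v |S_{vc}|/w_e$, and the decisive step is that substituting $w_e = 2\alpha w_v/(2\alpha-1)$ collapses the coefficient to exactly $2\alpha - 1$.

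The main step is really just spotting the right weight ratio; every other piece of the argument is short and mechanical, and I do not expect a serious obstacle. The only edge case is $\alpha = 1$, where the prescribed $w_e$ equals $2w_v$ and lies on the boundary of the admissible range; this case already falls within the regime of Lemma~\ref{l_7} and can be handled there. Polynomial running time is immediate since the WMD call is polynomial and extracting $V(D)$ together with reattaching any isolating vertices takes linear time.
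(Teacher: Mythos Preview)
Your argument is correct and is considerably more elegant than the paper's. The paper proceeds via a heavy gadget reduction: for each $i\in\{1,\dots,|V|\}$ it builds a graph $G_i$ containing a copy of $G$, an induced matching $M_i$ of size $i$, a complete bipartite graph linking $V(G)$ to one side of $M_i$, and a large star whose center is joined to everything; it then analyzes an $\alpha$-approximate solution on $G_\tau$ (where $\tau=|S_{vc}|$) by a case split on whether the star center is a vertex element, uses a replication trick to absorb an additive $2\alpha-1$ term, and finally loops over all $i$ because $\tau$ is unknown. Your approach sidesteps all of this by observing that the freedom to choose $w_e/w_v$ is itself a tunable parameter: running the WMD algorithm on $G$ with $w_e=2\alpha w_v/(2\alpha-1)$ and reading off $V(D)$ already gives the factor $2\alpha-1$ via the two-line inequality $|V_D|+2|E_D|\le 2w(D)/w_e$. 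What your argument buys is simplicity and a single oracle call; what the paper's gadget buys is that it works for any fixed pair $(w_v,w_e)$ in the range, not just the tuned one---but since the lemma hypothesizes an algorithm for the whole regime $w_v\le w_e<2w_v$, your use of a specific ratio is entirely legitimate.

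One small correction: your handling of $\alpha=1$ is not quite right. Lemma~\ref{l_7} concerns the regime $w_e\ge 2w_v$, which is \emph{not} the hypothesis here, so you cannot invoke it. The clean fix stays within your framework: take $w_e=2-1/n$ (still in $[w_v,2w_v)$), so that $|C|\le 2|S_{vc}|/(2-1/n)<|S_{vc}|+1$, and integrality of $|C|$ forces $|C|\le|S_{vc}|$.
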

\begin{proof}
For each instance $G=(V,E)$ of \textsc{Vertex Cover},
we construct $|V|$ instances $G_i=(V_i,E_i)$ of \textsc{Weighted Mixed Domination} with $w_v\leq w_e< 2w_v$ such that
a $(2\alpha-1)$-approximation solution to $G$ can be found in polynomial time based on an $\alpha$-approximation
solution to each $G_i$.

For each positive integer $1\leq i \leq |V|$, the graph
$G_i=(V_i,E_i)$  is constructed in the same way.
 Informally, $G_i$ contains a star $T$ of $2n+1$ vertices and an auxiliary graph $G'_i$ such that the center vertex $c_0$ of the star $T$  is connected to all vertices in $G'_i$,
where $G'_i$ contains a copy of $G$, an induced matching $M_i$ with size $|M_i|=i$, and a complete
bipartite graph between the vertices of $G$ and the left part of the
induced matching $M_i$. This is to say, $V_i=V\cup\{a_j\}_{j=1}^i \cup\{b_j\}_{j=1}^i \cup\{c_j\}_{j=0}^{2n}$
and $E_i=E\cup M_i\cup H_i \cup F_i$, where $M_i=\{a_jb_j\}_{j=1}^i$,
$H_i=\{va_j| v\in V, j\in \{1,\dots,i\} \}$, and
$F_i=\{c_0u| u\in V_i\setminus\{c_0\}\}$.
We give an illustration of the construction of~$G_i$ for $i=3$ in Figure~\ref{reduction1}.
In the graphs $G_i$, the values of $w_v$ and $w_e$ can be any values satisfying $w_v\leq w_e< 2w_v$.

\begin{figure}
  \centering
  \includegraphics[width=0.7\textwidth]{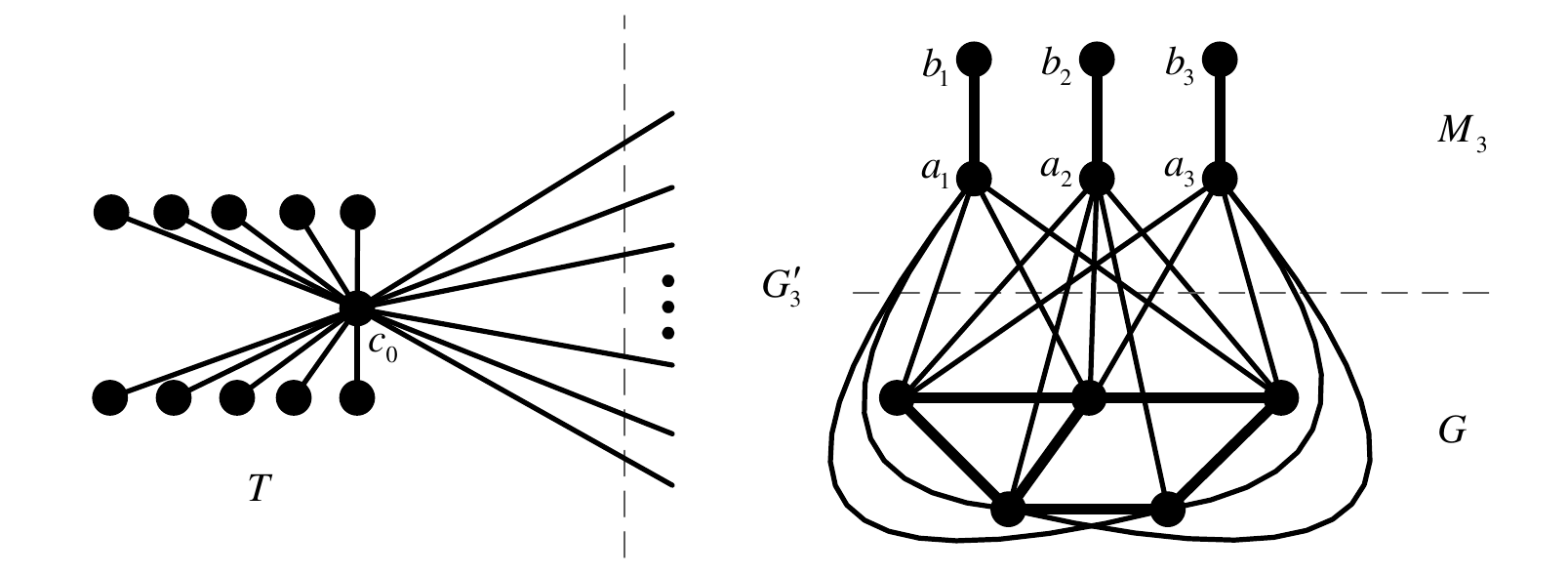}
  \caption{An illustration of the construction of $G_3$}\label{reduction1}
\end{figure}

Let $\tau$ be the size of a minimum vertex cover of $G$. We first show that we can get a $(2\alpha-1)$-approximation solution to $G$ in polynomial time based on an $\alpha$-approximation solution to $G_\tau$.

We define a function $w^*(G')$ on subgraphs $G'$ of $G$ as follows. For a subgraph $G'$ of $G$,
$$w^*(G')=\min_{D\in \mathcal{D}} \{w_v |V(G')\cap V_D|+{\frac{1}{2}}w_e |V(G')\cap V(E_D)|\}.$$
It is easy to see that
\begin{lemma}\label{ratio-0}
Let $S_{wmd}$ be an optimal solution to \textsc{Weighted Mixed Domination} on $G$. It holds that
$$ w(S_{wmd})\geq w^*(G),$$
and for any subgraph $G'$ of $G$ and any subgraph $G_1$ of $G'$, it holds that
$$w^*(G')\geq w^*(G_1)+w^*(G'-V(G_1)).$$
\end{lemma}
Let $D_\tau$ be an optimal solution to $G_\tau$ and $S_{vc}$ be a minimum vertex cover of $G$. By Lemma~\ref{ratio-0} and the definition of the function $w^*()$, we know that
$$w(D_\tau)  \geq w^*(G_\tau)\geq w^*(T)+w^*(G'_\tau).$$
Note that $T$ is a star and then $w^*(T)= w_v$. For $G'_\tau$, we know that the size of a minimum vertex cover of it is at least $2\tau$ because $M_\tau$ is an induced matching of size $\tau$ that needs at least $\tau$ vertices to cover all edges and the size of a minimum vertex cover of $G$ is $\tau$.
By Lemma~\ref{t_4} and $w_e< 2w_v$, we know that $w^*(G'_\tau)\geq \tau w_e$. Thus,
$w(D_\tau)\geq w_v+\tau w_e$.

On the other hand, $D'_\tau=\{c_0\}\cup M'$ is a mixed dominating set with $w(D'_\tau)= w_v+\tau w_e$, where $M'$ is a
perfect matching between $S_{vc}$ and $\{a_j\}_{j=1}^\tau$ with size $|M'|=\tau$. So we have
$$w(D_\tau)= w_v+\tau w_e. $$

Let $D^*_\tau$ be an $\alpha$-approximation solution to $G_\tau$.
We consider two cases. Case~1: the vertex $c_0$ is not a vertex element in $D^*_\tau$.
We will show that the whole vertex set $V$ of $G$ is of size at most $(2\alpha-1)\tau$,
which implies that the whole vertex set is a $(2\alpha-1)$-approximation solution to $G$.
For all the degree-1 vertices $\{c_j\}_{j=1}^{2n}$ in $G_\tau$,
Since all the degree-1 vertices $\{c_j\}_{j=1}^{2n}$ in $G_\tau$ should be dominated and their only neighbor $c_0$ is not a vertex element in the mixed dominating set, we know that $\{c_j\}_{j=1}^{2n}\subseteq V(D^*_\tau)\cap V(T)$.
For $G'_\tau$, an induced subgraph of $G_\tau$, the size of a minimum vertex cover of it is at least $2\tau$.
Let $D''_\tau\subseteq D^*_\tau$ be the set of vertices and edges in $G'_\tau$.
By $w_e < 2w_v$, we know that $w(D''_\tau)\geq \tau w_e$. Thus,
$$w(D^*_\tau) \geq 2n w_v + \tau w_e> (n+ \tau) w_e.$$

On the other hand, we have that
$$w(D^*_\tau)\leq \alpha w(D_\tau)= \alpha (w_v+\tau w_e)\leq\alpha (1+\tau)w_e.$$
Therefore,
$(n+ \tau) w_e < \alpha (1+\tau)w_e$. Thus, $n<\alpha +\alpha \tau -\tau \leq (2\alpha-1)\tau$.

 Case~2: the vertex $c_0$ is a vertex element in $D^*_\tau$.
 For this case, we show that $U_\tau =V(D^*_\tau) \cap V(G)$ is a vertex cover of $G$ with size at most $(2\alpha-1)\tau+ (2\alpha-1)$.
 Since $w(D^*_\tau)\leq \alpha (w_v+\tau w_e)$ and $w_v\leq w_e< 2w_v$, we know that $|V(D^*_\tau)|$ is at most $\alpha (2+2\tau)$.
 Since $M_\tau$ is an induced matching and $T$ is a star, we know that $V(D^*_\tau)$ contains at least $\tau$ vertices in $M_\tau$ and at least one vertex in $T$.
 Therefore,
$$|U_\tau|\leq  \alpha (2+2\tau)-\tau-1=(2\alpha-1)\tau +2\alpha-1.$$
We know that $U_\tau$ is a $(2\alpha-1+\epsilon)$-approximation algorithm for $G$, where $\epsilon= \frac{2\alpha-1}{\tau}$.
In fact, we can also get rid of $\epsilon$ in the above ratio by using one more trick. We let $G'$ be $2\lceil \alpha \rceil$ copies of $G$, and construct
$G_i$ in the same way by taking $G'$ as $G$. The size of the minimum vertex cover of $G'$ is $2\lceil \alpha \rceil \tau$ now. For this case, we will get
$|U_\tau|\leq  (2\alpha-1)2\lceil \alpha \rceil\tau +2\alpha-1$.
Due to the similarity of each copy of $G$ in $G'$, we know that for each copy of $G$ the number of vertices in $U_\tau \cap V(G)$ is at most $(2\alpha-1)\tau +\frac{2\alpha-1}{2\lceil \alpha \rceil}$.
The number of vertices is an integer. So we know that $U_\tau \cap V(G)$ is a vertex cover of $G$ with size at most $(2\alpha-1)\tau$.

\medskip

However, it is $NP$-hard to compute the size $\tau$ of the minimum vertex cover of $G$.
we cannot construct $G_\tau$ in polynomial time directly.
Our idea is to compute $U_i$ for each $G_i$ with $i\in \{1,\cdots, |V(G)|\}$ and return the minimum one $U_{i^*}$.
Therefore, $U_{i^*}$ is a vertex cover of $G$
with size $|U_{i^*}|\leq |U_\tau|$.
\end{proof}

\textsc{Vertex Cover} cannot be approximated within any
factor smaller than $10\sqrt{5}-21$ in polynomial time unless $P=NP$~\cite{DinurS02} and
cannot be approximated within any
factor smaller than $2$ in polynomial time under UGC~\cite{KhotR03j}. These results and Lemma~\ref{Hard-2} imply that

\begin{theorem}\label{hard22}
For any $\varepsilon>0$, \textsc{Weighted Mixed Domination} with $w_v\leq w_e< 2w_v$
 is not
$(5\sqrt{5}-10-\varepsilon)$-approximable in polynomial time unless
$P= NP$, and not $({\frac{3}{2}}-\varepsilon)$-approximable in polynomial time under UGC.
\end{theorem}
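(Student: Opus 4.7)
The plan is to derive Theorem~\ref{hard22} as an immediate corollary of Lemma~\ref{Hard-2} combined with the two cited inapproximability results for \textsc{Vertex Cover}, using only an arithmetic transfer of ratios. The contrapositive of Lemma~\ref{Hard-2} states: if \textsc{Vertex Cover} cannot be polynomial-time approximated within some factor $\beta$, then \textsc{Weighted Mixed Domination} with $w_v\leq w_e < 2w_v$ cannot be polynomial-time approximated within any $\alpha$ satisfying $2\alpha-1<\beta$, i.e.\ $\alpha<(\beta+1)/2$, since otherwise the lemma would furnish a $(2\alpha-1)$-approximation for \textsc{Vertex Cover} violating the assumed lower bound.

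First I would instantiate this with the Dinur--Safra bound $\beta=10\sqrt{5}-21$, which holds unless $P=NP$. Substituting gives the threshold
\[
\frac{\beta+1}{2}=\frac{10\sqrt{5}-20}{2}=5\sqrt{5}-10,
\]
so any purported $(5\sqrt{5}-10-\varepsilon)$-approximation for \textsc{Weighted Mixed Domination} would, via Lemma~\ref{Hard-2}, produce a $(2(5\sqrt{5}-10-\varepsilon)-1)=(10\sqrt{5}-21-2\varepsilon)$-approximation for \textsc{Vertex Cover}, contradicting~\cite{DinurS02}. Adjusting the choice of $\varepsilon$ (dividing by $2$) makes the contradiction formally match the statement in~\cite{DinurS02}, so the polynomial-time approximation lower bound of $5\sqrt{5}-10-\varepsilon$ follows for every $\varepsilon>0$.

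Second I would instantiate with the Khot--Regev bound $\beta=2$, which holds under UGC. The same computation yields the threshold $(\beta+1)/2=3/2$, and a hypothetical $(3/2-\varepsilon)$-approximation for \textsc{Weighted Mixed Domination} would give a $(2-2\varepsilon)$-approximation for \textsc{Vertex Cover}, contradicting~\cite{KhotR03j}. This establishes the UGC lower bound of $3/2-\varepsilon$.

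There is no real obstacle here beyond bookkeeping: Lemma~\ref{Hard-2} already carries out the delicate reduction that expands the ratio from $\alpha$ to $2\alpha-1$, and the two hardness results for \textsc{Vertex Cover} are invoked as black boxes. The only minor care needed is to quantify $\varepsilon$ correctly so that after the factor-of-two loss in passing from the \textsc{Vertex Cover} ratio to the \textsc{Weighted Mixed Domination} ratio, the slack parameter on both sides remains an arbitrary positive quantity; this is handled by rescaling $\varepsilon$, as is standard.
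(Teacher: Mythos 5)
Your proposal is correct and matches the paper's own argument: the paper derives Theorem~\ref{hard22} exactly by combining Lemma~\ref{Hard-2} with the Dinur--Safra and Khot--Regev lower bounds for \textsc{Vertex Cover}, and your arithmetic $(\beta+1)/2$ giving $5\sqrt{5}-10$ and $3/2$ is the intended computation. The rescaling of $\varepsilon$ you mention is the standard bookkeeping and is all that is needed.
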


\section{\textsc{Edge-Favorable Mixed Domination}}\label{sec:edge}
We show that \textsc{Edge-Favorable Mixed Domination} does not allow polynomial-time constant-ratio approximation algorithms if $P\neq NP$. The hardness result is obtained by a reduction from the \textsc{Set Cover} problem.

In an instance of \textsc{Set Cover}, we are given a set of elements $U=\{1,2,\dots, n\}$ and a collection $\mathcal{S}$
of $m$ nonempty subsets of $U$ whose union equals $U$, and the problem is to find a smallest number of subsets in $\mathcal{S}$
whose union equals $U$.
For an instance $I$ of \textsc{Set Cover}, we construct an instance $I'=(G,w_v,w_e)$ of \textsc{Edge-Favorable Mixed Domination}.
The graph $G=(V=V_S\cup V_U,E)$ is a bipartite graph containing $m+n(q^2+1)$ vertices, where $q=\lfloor m\ln n\rfloor$. The set $V_S$ contains $m$ vertices and
each vertex in $V_S$ is corresponding to a subset in $\mathcal{S}$. The set $V_U$ contains $n(q^2+1)$ vertices in total and
$V_U=V_{1}\cup V_{2}\dots \cup V_{q^2}\cup V_{q^2+1}$, where $|V_i|=n$ and each vertex in $V_i$ is corresponding to an element in $U$
for each $i\in \{1,2,\dots, q^2+1\}$. A vertex $v\in V_S$ is adjacent to a vertex $u\in V_U$ if and only if
the subset corresponding to $v$ contains the element corresponding to $u$.
Thus, if a subset contains $x$ elements, then the corresponding vertex in $V_S$ has degree exactly $x(q^2+1)$.
Let $w_v=1$ and $w_e=\frac{1}{q}$. We first prove the following result.

\textbf{Property 1}: For any ratio $\delta \leq \ln n$, a $\delta$-approximation solution $D^*$ to $I'$ will hold that\\
(i) $V_S \subseteq V(D^*)$, and \\
(ii) the set of subsets corresponding to $V_{D^*}\cap V_S$ is a set cover of $U$.

\medskip

Assume to the contrary that there is a vertex $v \in V_S$ such that $v$ is not in $V(D^*)$. Then all neighbors of $v$ should be in $V(D^*)$. Since $v$ has at least $q^2+1$ neighbors in $V_U$, which are not adjacent to each other, we
know that $D^*$ contains at least $q^2+1$ elements and $w(D^*)\geq w_e(q^2+1)> q$.
Note that the vertex set $V_S$ is a mixed dominating set and then $w(S_{wmd})\leq m$ for an optimal solution $S_{wmd}$ to $I'$.
Therefore, $\frac{w(D^*)}{w(S_{wmd})}> \frac{q}{m} \geq\ln n$, a contradiction.

Also assume to the contrary that the set of subsets corresponding to $V_{D^*}\cap V_S$ is not a set cover of $U$.
Thus there is a vertex $u\in V_U$ such that no neighbor of it is a vertex element in $D^*$,
which implies that $u$ and its $q^2$ twins (vertices in $V_D$ corresponding to the same element in $U$) are in $V(D^*)$.
Therefore, $D^*$ contains at least $q^2+1$ elements and $w(D^*)\geq w_e(q^2+1)> q$. In the same way, we can show a contradiction.
So Property 1 holds.

\medskip

Recall that we use $S_{sc}$ to denote a minimum set cover to $I$ and  $S_{wmd}$ denote an optimal mixed dominating set to $I'$.
We show that
\eqn{eq1}{w(S_{wmd}) =|S_{sc}|+\frac{m-|S_{sc}|}{q}.}

The optimal solution $S_{wmd}$ can be regarded as a 1-approximation solution to $I'$.
By Property 1, we know that $S_{wmd}$ contains at least
$m$ elements in total and at least $|S_{sc}|$ vertex elements. Therefore,
$$w(S_{wmd}) \geq w_v|S_{sc}|+w_e(m-|S_{sc}|)=|S_{sc}|+\frac{m-|S_{sc}|}{q}.$$

Next, we can construct a mixed dominating set $D'$
such that $w(D')= |S_{sc}|+\frac{m-|S_{sc}|}{q}$.
The mixed dominating set $D'$ is constructed as follows:
for each vertex in $V_S$ corresponding to a set in $S_{sc}$, we include it to $D'$ as a vertex element;
for each other vertex in $V_S$, we include an arbitrary edge incident on it to $D'$ as an edge element.
The set $D'$ constructed above is a mixed dominating set because $S_{sc}$ is a set cover (and thus, all vertices in $V_U$ are dominated by vertices in $V_S$) and all vertices in $V_S$ have been included to $D'$ (and thus, all edges will be dominated).
It holds that $w(D')=w_v|S_{sc}|+w_e(m-|S_{sc}|)=|S_{sc}|+\frac{m-|S_{sc}|}{q}$.
Then the optimal value for $I'$ is exactly $|S_{sc}|+\frac{m-|S_{sc}|}{q}$, and \refe{eq1} holds.

Equipped with Property 1 and \refe{eq1}, we are ready to prove the final result.
Let $D^*$ be an $\alpha$-approximation solution to $I'$ and $V_{D^*}$ be the set of vertex elements in $D^*$, where $\alpha \leq \ln n$. We prove that the set $C^*$ of subsets corresponding to $V_{D^*} \cap V_S$ is an $\alpha$-approximation solution to $I$.
By Property 1, we know that $C^*$ is a set cover. Next, we analyze the size of $C^*$.
Since $D^*$ is an $\alpha$-approximation solution to $I'$, we know that $w(D^*)\leq \alpha(|S_{sc}|+\frac{m-|S_{sc}|}{q})\leq
\alpha|S_{sc}|+ \frac{(m-|S_{sc}|)\ln n}{\lfloor m\ln n\rfloor}< \alpha|S_{sc}|+1$.
Thus, $D^*$ contains at most $\alpha|S_{sc}|$ vertex elements and then $|V_{D^*} \cap V_S|\leq \alpha|S_{sc}|$.
So the set $C^*$ of subsets corresponding to $V_{D^*} \cap V_S$ is an $\alpha$-approximation solution to $I$.

\begin{lemma}\label{ratio-3}
For any $\alpha \leq \ln n$, if \textsc{Edge-Favorable Mixed Domination} can be approximated in polynomial time within a factor of $\alpha$, then \textsc{Set Cover} can be approximated in polynomial time within a factor of $\alpha$.
\end{lemma}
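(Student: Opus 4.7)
The plan is to establish Lemma~\ref{ratio-3} by converting any polynomial-time $\alpha$-approximation algorithm $\mathcal{A}$ for \textsc{Edge-Favorable Mixed Domination} into a polynomial-time $\alpha$-approximation algorithm for \textsc{Set Cover}, using exactly the bipartite amplification construction developed above. Given a \textsc{Set Cover} instance $I = (U, \mathcal{S})$ with $|U| = n$ and $|\mathcal{S}| = m$, I would first build the EFMD instance $I' = (G, w_v, w_e)$ as described (bipartite graph with $m$ set-vertices, $q^2+1$ copies of each element-vertex, $q = \lfloor m \ln n \rfloor$, $w_v = 1$, $w_e = 1/q$), then run $\mathcal{A}$ on $I'$ to obtain a mixed dominating set $D^*$, and finally return $C^* = \{ S \in \mathcal{S} : v_S \in V_{D^*} \cap V_S\}$ as the candidate set cover. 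The reduction is clearly polynomial since $q$ is polynomial in the input size.

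Feasibility of $C^*$ is free: Property~1~(ii) directly states that $C^*$ is a set cover of $U$, provided $\alpha \leq \ln n$ so that the property applies. For the size bound, the plan is to chain the inequalities
\[
|C^*| \;=\; |V_{D^*} \cap V_S| \;\leq\; |V_{D^*}| \;\leq\; w(D^*) \;\leq\; \alpha\, w(S_{wmd}) \;=\; \alpha\!\left(|S_{sc}| + \frac{m - |S_{sc}|}{q}\right),
\]
where the second step uses $w_v = 1$, the third is the approximation guarantee, and the equality is \refe{eq1}. Substituting $\alpha \leq \ln n$ and the definition of $q$ gives $|C^*| < \alpha |S_{sc}| + \frac{(m-|S_{sc}|)\ln n}{\lfloor m \ln n \rfloor}$; calibration of $q$ makes the tail strictly below $1$, and since $|C^*|$ is an integer this forces $|C^*| \leq \alpha |S_{sc}|$, which is the desired approximation guarantee for \textsc{Set Cover}.

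The real content — and thus the main obstacle if one were not handed the construction — is the delicate calibration of the duplication factor $q^2+1$ and the edge weight $w_e = 1/q$, which must work in tension. The duplication has to be large enough to enforce Property~1: any solution that skips a vertex of $V_S$, or that fails to cover some element, would be forced to dominate a private cluster of $q^2+1$ pairwise non-adjacent element-vertices using edges alone, driving $w(D^*)$ past the trivial upper bound $w(V_S) = m$ by a factor exceeding $\ln n$. At the same time, $w_e = 1/q$ has to be small enough that the $m - |S_{sc}|$ edge elements of the optimum in \refe{eq1} contribute less than one unit of weight even after being amplified by $\alpha \leq \ln n$, which is exactly why $q = \lfloor m \ln n \rfloor$ is chosen. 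Once those parameters are locked in, the rest is the short chain of inequalities above plus an integrality rounding step.
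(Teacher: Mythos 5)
Your proposal is correct and follows essentially the same route as the paper: build the same amplified bipartite instance, apply Property~1 for feasibility of $C^*$, and chain $|V_{D^*}\cap V_S|\le w(D^*)\le\alpha\,w(S_{wmd})<\alpha|S_{sc}|+1$ via \refe{eq1} and the choice $q=\lfloor m\ln n\rfloor$, finishing with the same integrality rounding the paper uses. Nothing differs in substance, including the final step that passes from the strict bound $<\alpha|S_{sc}|+1$ to $\le\alpha|S_{sc}|$.
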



It is known that for any $\epsilon> 0$, \textsc{Set Cover} cannot be approximated to $(1-\epsilon)\ln n$ in polynomial time unless $P=NP$~\cite{DS2014}. By this result together with Lemma~\ref{ratio-3}, we get a lower bound for \textsc{Edge-Favorable Mixed Domination}.

\begin{theorem}\label{th_4}
\textsc{Edge-Favorable Mixed Domination} cannot be approximated to $(1-\epsilon)\ln n$ in polynomial time unless $P=NP$,
for any $\epsilon> 0$.
\end{theorem}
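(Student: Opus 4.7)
The plan is an immediate combination of Lemma~\ref{ratio-3} with the Dinur--Steurer lower bound for \textsc{Set Cover}. Proceeding by contradiction, I would fix $\epsilon>0$ and suppose there is a polynomial-time $(1-\epsilon)\ln n$-approximation algorithm $\mathcal{A}$ for \textsc{Edge-Favorable Mixed Domination}, where $n=|V|$ of the input graph. Given an arbitrary \textsc{Set Cover} instance $I$ on a universe of size $n_{SC}$ with $m$ sets, I would invoke the construction that precedes Lemma~\ref{ratio-3} to produce the EFMD instance $I'=(G,w_v,w_e)$ with $|V(G)|=m+n_{SC}(q^2+1)$ vertices and $q=\lfloor m\ln n_{SC}\rfloor$, run $\mathcal{A}$ on $I'$, and apply Lemma~\ref{ratio-3} to extract from the returned mixed dominating set a set cover $C^*$ whose cardinality is within the same factor of $|S_{sc}|$.

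The ratio produced this way is $(1-\epsilon)\ln|V(G)|$, which must be compared with $\ln n_{SC}$ in order to invoke \cite{DS2014}. Because $|V(G)|$ is polynomially bounded in $n_{SC}$ and $m$, and the Dinur--Steurer hard instances can be taken with $m$ of an arbitrarily small polynomial degree in $n_{SC}$ (indeed even $m=n_{SC}^{o(1)}$), one can arrange for $\ln|V(G)|\leq(1+\eta)\ln n_{SC}$ to hold asymptotically for any prescribed $\eta>0$. Choosing $\eta$ small enough that $(1-\epsilon)(1+\eta)<1-\epsilon'$ for some $\epsilon'>0$ then yields a polynomial-time $(1-\epsilon')\ln n_{SC}$-approximation for \textsc{Set Cover}, contradicting the Dinur--Steurer lower bound and hence forcing $P=NP$. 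The hypothesis $\alpha\leq \ln n_{SC}$ of Lemma~\ref{ratio-3} is automatically satisfied in this regime, since the achieved EFMD ratio is strictly below $\ln n_{SC}$.

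The only nontrivial aspect is precisely this reconciliation between the two meanings of ``$n$''---the number of vertices of the EFMD graph appearing in the theorem statement, versus the \textsc{Set Cover} universe size in Lemma~\ref{ratio-3} and in \cite{DS2014}. Everything else is a one-line composition of the two cited results: Lemma~\ref{ratio-3} transfers the hypothetical EFMD approximation to \textsc{Set Cover}, and \cite{DS2014} supplies the contradiction. If one is willing to tolerate the conversion factor, the theorem may even be stated in terms of $n_{SC}$ directly, in which case the bookkeeping vanishes and the proof reduces to a single sentence citing both results.
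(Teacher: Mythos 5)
Your proposal matches the paper's proof, which is exactly the one-line composition of Lemma~\ref{ratio-3} with the Dinur--Steurer lower bound for \textsc{Set Cover}. The reconciliation between $\ln|V(G)|$ and $\ln n_{SC}$ that you spell out is in fact glossed over entirely in the paper (the published argument silently conflates the two parameters, even though $|V(G)|=m+n_{SC}(q^2+1)$ is polynomially larger than $n_{SC}$), so your version is, if anything, the more careful one.
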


\section{Concluding Remarks}\label{sec:con}
Domination problems are important problems in graph theory and graph algorithms.
In this paper, we give several approximation upper and lower bounds on \textsc{Weighted Mixed Domination},
where all vertices have the same weight and all edges have the same weight.
For the general weighted version of \textsc{Mixed Domination} such that each vertex and edge may receive a different weight,
 the hardness results in this paper show that it will be even harder and we may not be easy to get significant upper bounds.
For further study, it will be interesting to reduce the gap between the upper and lower bounds in this paper.

\section*{Acknowledgements}
This work was supported by the National Natural Science Foundation of China,
under grants 61772115 and 61370071.


\end{document}